\newtheorem{theorem}{Theorem}[section]
\newtheorem{corollary}[theorem] {Corollary}
\newtheorem{conjecture}[theorem] {Conjecture}
\newtheorem{lemma}[theorem]{Lemma}
\theoremstyle{definition}
\newtheorem{definition}{Definition}[section]
\newcommand{\poly}{\mathrm{poly}}
\newcommand{\C}{{\mathcal{C}}}
\newcommand{\ket}[1]{\left | #1 \right \rangle}
\def\pr{{\rm prob}}
\def\cd{{\cal D}}
\def\cp{{\cal P}}
\def\cb{{\cal B}}
\def\cf{{\cal F}}
\def\ca{{\cal A}}
\def\co{{\cal O}}
\def\ct{{\cal T}}
\def\cc{{\cal C}}
\begin{document}

\title{ 
Quantum advantage of unitary Clifford circuits\\ with magic state inputs}

\author{Mithuna Yoganathan, Richard Jozsa and Sergii Strelchuk\\
  \small\it DAMTP, Centre for Mathematical Sciences, University of Cambridge,\\ \small\it Wilberforce Road, Cambridge CB3 0WA, U.K.
  }

\date{}

\maketitle

\begin{abstract}

We study the computational power of unitary Clifford circuits with solely magic state inputs (CM circuits), supplemented by classical efficient computation. We show that CM circuits are hard to classically simulate up to multiplicative error (assuming PH non-collapse), and also up to additive error under plausible average-case hardness conjectures. Unlike other such known classes, a broad variety of possible conjectures apply. Along the way we give an extension of the Gottesman-Knill theorem that applies to universal computation, showing that for Clifford circuits with joint stabiliser and non-stabiliser inputs, the stabiliser part can be eliminated in favour of classical simulation, leaving a Clifford circuit on only the non-stabiliser part. 
Finally we discuss implementational advantages of CM circuits.

\end{abstract}

\section{Introduction}

A fundamental goal of quantum complexity theory is to prove that quantum computers cannot be efficiently simulated by classical computers. An approach to proving this was put forward by Bremner et al. \cite{Bremner2011}, showing that if a particular class of quantum circuits, so-called IQP circuits, could be efficiently classically simulated up to multiplicative error then the polynomial hierarchy (PH) would collapse. However on physical grounds it is more natural to consider classical simulations with additive or $l_1$ error. In this vein, Aaronson and Arkhipov \cite{Aaronson2011} showed that assuming the validity of two plausible complexity theoretic conjectures, the quantum process of boson sampling cannot be efficiently simulated up to additive error unless there is PH collapse. The conjectures are referred to as the anticoncentration conjecture and average-case hardness conjecture. Bremner, Montanaro and Shepherd \cite{Bremner2016} showed a similar result for IQP circuits, and furthermore they were able to prove the anticoncentration conjecture in their context. Since then, there have been further similar results for various classes \cite{Morimae2017, Bouland2017, Pashayan2017, Miller2017, Mann2017}.

In this paper we introduce a  subclass of quantum computing that we call Clifford Magic (CM), inspired by the PBC (Pauli Based Computing) model of Bravyi, Smith and Smolin \cite{Bravyi2016}, and establish a variety of its properties.The class CM comprises quantum circuits of unitary Clifford gates with fixed input $|A\rangle^{\otimes t}$ (for $t$ qubit lines) where $|A\rangle=\frac{1}{\sqrt{2}}(|0\rangle+e^{i \pi/4}|1\rangle)$ and with output given by final measurement of  some number of qubits in the computational basis. For computational applications we will also allow classical polynomial time computation for assistance before and after the Clifford circuit is run, in particular to determine the structure of  a CM process ${\cal C}_w$ for each computational input bit string $w$. If the Clifford gates could adaptively depend on further intermediate  measurements (not allowed here), the latter model would be universal for quantum computation, but our model appears to be weaker than universal. Our main result is to show that nevertheless, this class is hard to classically simulate up to additive error, given any one of a broad variety of average-case hardness conjectures. 

This result has been shown in the recent works \cite{Bouland2017} and \cite{Pashayan2017} (and our results were developed independently concurrently) but only for a single particular hardness conjecture. Furthermore both papers prove the anticoncentration conjecture by using the fact that random Clifford circuits form a $k$-design for suitable $k$. The idea of using $k$-designs to prove anticoncentration conjectures is explored in \cite{Hangleiter2017}. In this paper, we use a different approach. We show that this class, although unlikely to be universal, suffices to emulate the hardness of other classes of computations  already known to have the desired properties, thereby establishing hardness of CM simulation up to additive error, given any one of a number of inherited hardness conjectures.

Along the way we also establish  a generalised form of the Gottesman--Knill theorem viz. that any adaptive Clifford computation (now allowing intermediate measurements)  with input $\sigma \otimes \rho$, where $\sigma$ is a stabiliser state, can be simulated by an adaptive Clifford circuit on just $\rho$, with the help of polynomial time classical processing. This result amounts to a translation of the PBC model back into the circuit model, but has considerable conceptual interest in its own right, applying also to universal quantum computation. The standard Gottesman--Knill theorem \cite{Nielsen2010} is obtained in the case that the whole input is a stabiliser state and then the simulation can be done entirely classically. Thus for universal quantum computation represented in the model of adaptive Clifford circuits with magic state inputs \cite{Bravyi2004}, we can trade off part of the quantum processing for classical processing while compressing the quantum space requirement i.e. the number of qubits needed. 

Finally we will consider the feasibility of experimentally implementing CM circuits. This has become an increasingly relevant topic with the expected imminent availability of small quantum computers that may allow physical implementation of quantum algorithms unlikely to be simulatable even by the best classical computers \cite{Harrow2017}. We show that CM circuits have several properties that may make them advantageous for prospective experimental realisation in the near term. We show that in the measurement based computing model (MBQC), given the standard graph state, any CM circuit can be implemented without adaptions, and hence can be implemented in MBQC depth one. We also show that CM has good properties when it is made fault tolerant in both the circuit and MBQC models: while syndrome measurements must be performed, the associated correction operators need not be applied. Also, in MBQC given an initial state that can be created offline with high fidelity, CM can be implemented fault tolerantly with one further time step. 


\section{Preliminaries} \label{prelim}


$X$, $Y$ and $Z$ will denote the standard 1-qubit Pauli operations and ${\cal P}_n$ will denote the $n$-qubit Pauli group (generated by tensor products of the 1-qubit Pauli operations). $Z_i$ will denote the Pauli operation having $Z$ on the $i^{\rm th}$ line and $I$ on all other lines.
Pauli measurements for $P\in {\cal P}_n$ will have outcomes $\pm 1$. This applies to $Z_i$ measurements too, having outputs $\pm1 $ rather than bit values 0 and 1. We will state explicitly when the latter are used as output labels. A Pauli measurement $P$ is said to be dependent on Pauli measurements $Q_1, \ldots , Q_K$ if $P=\pm Q_1^{a_1}\ldots Q_K^{a_K}$ for some $a_1, \ldots , a_K\in \{ 0,1\}$.
$\ket{A}$ will denote the 1-qubit magic state $\ket{A}=\frac{1}{\sqrt{2}}(\ket{0}+e^{i \pi/4}\ket{1})$.

A stabiliser group $\mathcal{S}$ is a commuting subgroup of ${\cal P}_n$ that does not include $-\mathbb{I}$ . 
An $n$ qubit pure state $\ket{\psi}$ is a pure stabiliser state if it is stabilised by every element of a stabiliser group $\mathcal{S} $  (i.e. $S\ket{\psi}=\ket{\psi}$ for all $S\in \mathcal{S}$) that has $n$ independent generators (so then $\ket{\psi}$ is uniquely fixed by $\mathcal{S}$). More generally an $n$ qubit state $\rho$ is a mixed stabiliser state if it has the form 
\begin{equation}
\rho= \frac{1}{2^{n-s}}\prod \frac{\mathbb{I}+S_i}{2}.
\end{equation}
where $S_1, \ldots , S_s$ with $s\leq n$ are independent generators of a stabiliser group $\mathcal{S}$.
It is also stabilised by all the elements of $\mathcal{S}$ and may alternatively be described as the state produced by measuring the maximally mixed state with the (commuting) measurements $S_1$,...,$S_s$ and postselecting each  on outcome $+1$. 

Unitary Clifford circuits will always be assumed to be given as circuits of some chosen set of one and two qubit Clifford gates that suffice for any Clifford operation e.g. the Hadamard gate $H$, controlled NOT gate $CX$ and phase gate $S= {\rm diag}(1\,\,\, i)$. We will also consider circuits with intermediate $Z$ measurements and possibly adaptive choices of later gates, as formalised in the following definition.

\begin{definition}\label{adaptcircuit} An {\em adaptive quantum circuit} $C$ on $n$ qubits, with input state $\alpha$ and output distribution $P_C$ comprises the following ingredients. We have a specified sequence of steps (on the $n$-qubit state $\alpha$) of length poly$(n)$, with the following properties:\\
(i) each step is either a unitary gate or a non-destructive $Z$ basis measurement. Post-measurement states from intermediate measurements may be used as inputs to the next step.\\
(ii) each step is specified as a function of previous measurement outcomes by a classical (possibly randomised) poly$(n)$ time classical computation.\\ If no steps depend on previous measurement outcomes then the circuit is called {\em non-adaptive}, and if there are no intermediate measurements steps, then the circuit is called {\em unitary}.\\
The output distribution $P_C$ is the probability distribution of a specified set of measurements (called output measurements). Without loss of generality this may be taken to be the set of all measurements of the circuit $C$ and we often omit explicit mention of the output set.\, $\Box$

\end{definition}

We will use the non-Clifford $T$ gate defined by $T= {\rm diag}(1\,\,\, e^{i\pi/4})$.
It is well known that the $T$ gate can be implemented by the so-called $T$-gadget \cite{Nielsen2010}, using an extra ancilla qubit line (labelled $a$) in state $|A\rangle$  and adaptive Clifford operations:  to apply $T$ to a qubit line $k$ in a circuit, we first apply $CX_{ka}$ with the ancilla as target qubit, and then measure the ancilla qubit in the $Z$ basis giving outcome $+1$ or $-1$ (always with equal probability). Finally an $S$ correction is applied to the original qubit line if the outcome was $-1$. The ancilla qubit is never used again and may be discarded. The final result in every case is to apply $T$ to line $k$ up to overall phase. It will also be useful to note that we can implement the $T^\dagger$ gate using a similar gadget: we perform the $T$-gadget process as above but for the final adaptive correction we instead apply an $S^3$ correction if the outcome was $+1$.

Clifford operations with $T$ gates are universal for quantum computation. Using the $T$-gadget we see that any (universally general) circuit composed of Clifford gates and a number $t$ of $T$ gates can be rewritten as an adaptive circuit of only Clifford gates (and intermediate $Z$ basis measurements) with the addition of $t$ additional ancilla qubit lines initialised in state $|A\rangle^{\otimes t}$. 

Finally, we define a notion of weak simulation of one quantum process by another, that we will use in this work. 
\begin{definition}\label{weaksimdef} We say that a circuit $C$ (on $n$ qubits, with input state $\alpha$, and output distribution $P_C$) can be {\em weakly simulated} by a circuit $\tilde{C}$ (on $m$ qubits, with input state $\beta$, and output distribution $P_{\tilde{C}}$) if\\
(i) a description of the circuit $\tilde{C}$ may be given by a classical poly$(n)$ time (possibly randomised) translation from a description of $C$, and\\
(ii) a sample of the distribution $P_C$ can be produced from a sample of $P_{\tilde{C}}$ together with poly$(n)$ time classical (randomised) computation.\, $\Box$

\end{definition}

(More precisely, in the above definitions the poly$(n)$ bounds refer to a situation in which we are considering a uniform family of circuits depending on an associated parameter $n\in \mathbb{N}$, which will be clear from the context when needed.)

%
%

\section{Extending the Gottesman--Knill theorem}


We begin by establishing an extended form of the Gottesman--Knill theorem that will be used later in our development of CM circuits.

The standard form of the Gottesman-Knill theorem asserts that any adaptive Clifford circuit with stabiliser state input may be classically efficiently weakly simulated \cite{Gottesman2008,Jozsa2013}. As noted above, universal quantum computation can be performed using adaptive Clifford circuits which include additional (non-stabiliser) $\ket{A}$ state ancilla inputs, motivating the consideration of Clifford circuits on such more general inputs. In our extension of the Gottesman-Knill theorem we consider adaptive Clifford circuits but now allow the input to have a non-stabiliser part. We show that it may be weakly simulated by a hybrid classical-quantum process whose quantum part (obtained by an efficient classical reduction from the description of the original circuit) is an adaptive Clifford circuit acting now only on the non-stabiliser part of the original input, thereby relegating the stabiliser-input part of the original computation into efficient classical computation instead. In the special case where the initial input is fully a stabiliser state, we recover the standard Gottesman--Knill theorem, as our hybrid process then has no residual quantum part. This is stated formally as follows:

\begin{theorem} \label{EGK} (Extended Gottesman--Knill Theorem)
Let $\mathcal{C}$ be any adaptive Clifford circuit with input state  $\sigma\otimes\rho$, where $\sigma$ is a stabiliser state of $n$ qubits and $\rho$ is an arbitrary state of $t$ qubits, and the output is given by measurement of any specified qubit lines. (Usually we will also have $t=O({\rm poly}(n))$). Then\\
(i)  $\mathcal{C}$ can be weakly simulated by an adaptive Clifford circuit $\mathcal{C}^*$ on $t$ qubits with input $\rho$, assisted by poly$(n+t)$-time classical computation, and with $\mathcal{C}^*$ having at most $t$ (intermediate or final) measurements;\\
(ii) if  $\mathcal{C}$ is non-adaptive then $\mathcal{C}^*$ may be taken to be unitary (with $Z$ basis measurements only for outputs at the end). \\
(iii) If some $Z$ measurements in $C$ are to be postselected to outcome $+1$, this circuit can be weakly simulated by a circuit $\mathcal{C}^*$ as in case (i), where some of the $Z$ measurements are postselected to outcome $+1$. 
\end{theorem}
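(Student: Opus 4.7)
The plan is to reduce $\mathcal{C}$ to a sequence of adaptive Pauli measurements on $\sigma \otimes \rho$, then classify each measurement's outcome as deterministic, uniformly random independent of $\rho$, or dependent on $\rho$; the first two cases are handled classically and the third is transferred to $\mathcal{C}^*$ as a measurement on the $\rho$-register. The approach constitutes a circuit-model translation of the Pauli-based computation framework of \cite{Bravyi2016}.

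First I would push every Clifford gate of $\mathcal{C}$ past its subsequent measurements via the standard Heisenberg trick: a $Z_i$ measurement preceded by a Clifford $U$ is equivalent to measuring the Pauli $U^\dagger Z_i U$ followed by $U$. Iterating, all Cliffords get absorbed into classical relabelling of outputs, and adaptive dependencies become classical poly-time rules for selecting the next Pauli to measure. The result is an equivalent sequence of adaptive Pauli measurements $P_1, P_2, \ldots$ on $\sigma \otimes \rho$, with postselection labels preserved for (iii).

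I then process each $P_k$ while classically maintaining the joint stabiliser group $\mathcal{G}^{(k)}$ of the current joint state (initialised as $\{S_j \otimes I\}$ from $\sigma$'s stabilisers and updated via the standard tableau rules). Three cases arise: (a) $\pm P_k \in \mathcal{G}^{(k)}$, with deterministic outcome read off classically; (b) $P_k$ commutes with every generator, $\pm P_k \notin \mathcal{G}^{(k)}$, and $P_k$ is reducible modulo $\mathcal{G}^{(k)}$ to a Pauli of the form $I \otimes B_k'$ on $\rho$ alone, in which case we measure $B_k'$ on the $\rho$-register in $\mathcal{C}^*$ via a Clifford basis change and a $Z$-measurement; (c) otherwise, the outcome is uniformly random $\pm 1$ independent of $\rho$, sampled by a classical coin. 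In all cases $\mathcal{G}^{(k+1)}$ is updated, and in cases (b) and (c) a compensating Pauli gate on $\rho$ in $\mathcal{C}^*$ may be applied to reflect the altered correspondence between the $\rho$-register and the arbitrary (non-stabilised) part of the joint state.

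The $t$-measurement bound in (i) follows because any two case-(b) measurements commute (the second commutes with every generator of $\mathcal{G}^{(k+1)}$, including the newly added $m_k P_k$ from the first); hence the $B_k'$ form a commuting family of independent Paulis on the $t$-qubit $\rho$-register, of rank at most $t$, and any further case-(b) candidate reduces to case (a). Part (ii) follows because when $\mathcal{C}$ is non-adaptive all $P_k$ are predetermined: one pre-samples the case-(c) coins and compiles compensating Paulis and basis changes into a single Clifford prefix, leaving $\mathcal{C}^*$ as a unitary circuit with only final $Z$-basis measurements. Part (iii) follows from the case analysis: postselected $\mathcal{C}$-measurements become postselected $\mathcal{C}^*$-measurements in case (b) (using a classical Pauli correction to convert postselection-to-$-1$ into postselection-to-$+1$) and fix classical outcomes in cases (a) and (c). The main obstacle is managing the tableau and verifying that the compensating Pauli gates in cases (b) and (c) faithfully track the evolving reduced state of $\rho$ at every step---standard stabiliser-formalism book-keeping but in need of careful execution.
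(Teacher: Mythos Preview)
Your overall strategy matches the paper's (and the Bravyi--Smith--Smolin PBC framework): push Cliffords past measurements to obtain an adaptive list of Pauli measurements on $\sigma\otimes\rho$, then classify each one. However, your case analysis has a genuine gap.

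Your case (c) is too broad and the claim ``uniformly random independent of $\rho$'' is false there. Take $n=t=1$, $\sigma=\ket{0}$, and the commuting pair $P_1=X\otimes X$, $P_2=Z\otimes Z$. Processing $P_1$ lands in your case (c); the standard tableau update gives $\mathcal{G}^{(1)}=\langle m_1\,X\otimes X\rangle$. Now $P_2$ commutes with this generator, is not in the group, and is \emph{not} reducible modulo $\mathcal{G}^{(1)}$ to any $I\otimes B$ (multiplying by the generator yields $\pm Y\otimes Y$). So your rule sends $P_2$ to case (c). But on the actual post-$P_1$ state $(\ket{0}\ket{\psi}+m_1\ket{1}X\ket{\psi})/\sqrt{2}$ one computes $\langle Z\otimes Z\rangle=\langle\psi|Z|\psi\rangle$, so the $P_2$ outcome depends on $\rho$. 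In other words, once a case-(c) step has replaced a generator of the form $S_j\otimes I$ by one that acts nontrivially on the $\rho$ block, ``reducible to $I\otimes B$'' is no longer the right test, and your ``otherwise'' bucket swallows measurements that genuinely probe $\rho$.

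The paper's mechanism (its step (c)(iii) in the proof of Theorem~\ref{pbcthm}) is exactly what is missing: after an anticommuting measurement it replaces that measurement by the Clifford $V=(\lambda_N N+\lambda_P P)/\sqrt{2}$ and then \emph{commutes $V$ out past all later $P_j$'s}, conjugating them. Because the dummy $Z_1,\ldots,Z_n$ at the head of the list are never touched, any later $P$ that commutes with everything to its left in particular commutes with all $Z_j$ and therefore automatically has only $Z/I$ on the first $n$ tensor factors---so it is always a measurement on $\rho$ alone (Lemma~\ref{twiddle}). Your tableau-only bookkeeping discards those $Z_j$'s and loses this guarantee, and a ``compensating Pauli on $\rho$'' cannot repair it: what is needed is a Clifford conjugation of the \emph{remaining $P_j$'s} (equivalently, maintaining a full logical frame/destabiliser, not just the stabiliser group). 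This also affects your argument for (ii): the paper needs the extra observation that in the non-adaptive case the anticommuting partner can always be taken to be one of the dummy $Z_j$'s (with deterministic $\lambda_N=+1$), which again relies on keeping those $Z_j$'s in the list.
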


The proof of the Extended Gottesman--Knill Theorem will be given in Subsection \ref{egkproof} below. It rests on the so-called Pauli based model of computation (PBC) introduced by Bravyi, Smith, and Smolin in \cite{Bravyi2016}. Before the proof of Theorem \ref{EGK} we will in Subsection \ref{pbcmodel}, give an account of (a slightly generalised version of) the PBC formalism and its main features that we will use.

The Extended Gottesman-Knill theorem will be used in this paper to show that certain quantum circuits can be simulated by CM circuits (cf Section \ref{cm}). However, we expect that the theorem will be of independent interest, for example for considerations of compiling quantum circuits with as few qubits as possible. Indeed starting with the circuit model of quantum computation we may represent any circuit as a circuit of Clifford gates and $T$ gates, and then use $T$-gadgets to implement the $T$ gates, resulting in an adaptive Clifford circuit. Implementing the circuit this way allows for error correction using stabiliser codes \cite{Nielsen2010}, but it also increases the number of qubits. Given the high practical cost of adding extra qubits, one naturally strives to minimise their number in near term devices. The Extended Gottesman--Knill theorem provides a way to remove all qubits originally in a stabiliser state, as well as any stabiliser ancillas. The resulting circuit is also an adaptive Clifford circuit, now having at most $t$ measurements. This is summarised in Figure 1.

\begin{figure}[h!] 
\label{compression}
    \begin{center}
    \includegraphics[trim=120 650 0 50,width=\textwidth]{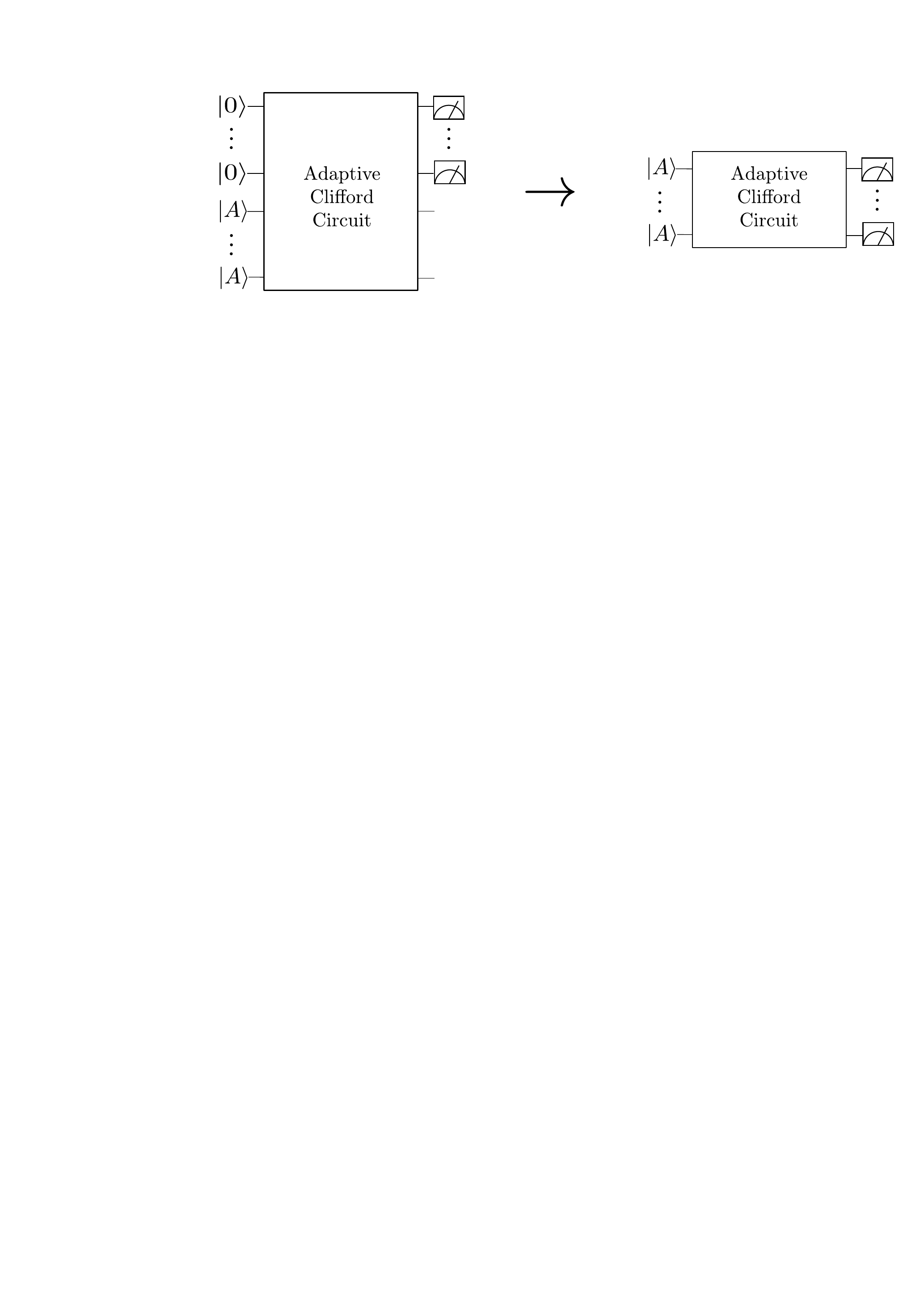}
    \caption{The Extended Gottesman Knill theorem (Theorem \ref{EGK}) allows us to take a universal quantum circuit expressed as a Clifford circuit with $T$-gadgets and compress it using only a classical polynomial time overhead. This compression removes all input state components that are stabilisers and the resulting circuit is an adaptive Clifford circuit with a number of (intermediate and final) measurements at most equal to the number of lines in the compressed circuit.}
    \end{center}
   
\end{figure}

In \cite{Aaronson2004} and \cite{Bravyi2016a} a different kind of extension of the Gottesman-Knill theorem is developed. It is shown that a circuit on $n$ qubit lines with stabiliser input and $t$  $T$ gates, can be classically simulated in time exponential in $t$ and polynomial in $n$. This reduces to the original Gottesman--Knill theorem when $t=0$. 
Our Extended Gottesman Knill theorem provides an alternative proof of this fact: using Theorem \ref{EGK} any such computation (after replacing $T$ gates by $T$-gadgets) can be compressed to a quantum computation on $t$ qubits, and this can be and then be classically simulated in time exponential in $t$.

\subsection{The Pauli based model of computation (PBC)}\label{pbcmodel}

\begin{definition}\label{pbcdef} (PBC circuits and the Pauli based computing model)\\
(i) A {\em PBC circuit} $C$ on $t$ qubits with any input state $\rho$, is a sequence $C$  of pairwise commuting and independent Pauli measurements $P_1, \ldots ,P_s$ from $ \cp_t$ (applied sequentially to $\rho$ with each post-measurement state being available for the next measurement). The choice of each $P_i$ can generally adaptively depend on previous measurement outcomes. If no $P_i$ depends on previous measurement outcomes then the PBC circuit is called non-adaptive.\\
(ii) For computational applications ({\em the PBC model of computing}) we will use a uniform family $\{ C_w: w\in \cb \}$ of PBC circuits on $t= {\rm poly}(n)$ qubits where $n$ is the length of the bit string $w$, and furthermore, each $C_w$ is required to have the input state $\rho =\ket{A}^{\otimes t}$. The result of the computation is given by a specified poly$(n)$ time (randomised) classical computation on $w$ together with the measurement outcomes of the circuit $C_w$.
\,$\Box$ 

\end{definition}

\begin{theorem}\label{pbcthm}  (adapted from Ref \cite{Bravyi2016}). Let $C$ be any (generally adaptive) quantum circuit on $n+t$ qubits with input state $\alpha = \sigma\otimes \rho$ where $\sigma$ is a stabiliser state of $n$ qubits and $\rho$ is any state of $t$ qubits. Suppose also that the unitary steps of $C$ are all Clifford gates. Then:\\
(i) $C$ may be weakly simulated by a (generally adaptive) PBC circuit $\tilde{P}_1,\ldots ,\tilde{P}_s$ on $t$ qubits with input state $\rho$, and with $s\leq t$ steps.\\ 
(ii) If $C$ is non-adaptive (with final $Z$ basis measurement outputs) then the PBC circuit $\tilde{P}_1,\ldots ,\tilde{P}_s$ in (i) can also be chosen to be non-adaptive.\\
(iii) If some $Z$ measurements in $C$ are to be postselected to outcome $+1$, then this circuit can be weakly simulated by a PBC circuit in which some of the Pauli measurements are postselected to outcome $+1$. 
\,
$\Box$
\end{theorem}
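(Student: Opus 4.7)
The plan is to convert the Clifford circuit $C$ into a sequence of Pauli measurements on $\sigma\otimes\rho$ by propagating Cliffords, and then reduce these to Pauli measurements on $\rho$ alone via a joint stabiliser-tableau argument, following the general strategy of \cite{Bravyi2016}.

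First I would push every Clifford gate past each subsequent measurement using the identity that a Clifford $U$ followed by a Pauli measurement $P$ is equivalent to the measurement of the Pauli $U^{\dagger} P U$ (still a Pauli by the Clifford property) followed by $U$. Iterating, all Cliffords end up after all measurements and can be absorbed into the output $Z$-basis measurements, turning each output into a Pauli measurement on the input. The output distribution is then reproduced by an adaptive sequence of Pauli measurements $Q_1,\ldots,Q_L$ on $\sigma\otimes\rho$, where each $Q_k$ depends classically on previous outcomes. In the non-adaptive case the $Q_k$'s automatically pairwise commute (inherited from the pairwise commutation of $Z_j$'s conjugated by a single Clifford), which is needed for part (ii). I would then absorb a Clifford $V$ with $V|0\rangle^{\otimes n}\langle 0|^{\otimes n}V^{\dagger}=\sigma$ into the measurements (handling the pure case; the mixed case follows analogously by adjusting the stabiliser rank), reducing to $\sigma=|0\rangle^{\otimes n}\langle 0|^{\otimes n}$ and writing each $Q_k=A_k\otimes B_k$ with $A_k$ on the first $n$ qubits.

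The core step is an iterative reduction maintaining a running stabiliser tableau $\mathcal{T}_k\subset\cp_{n+t}$ of the combined post-measurement state, initialised as $\mathcal{T}_0=\langle Z_1,\ldots,Z_n\rangle$ and updated after each measurement according to standard stabiliser rules. At step $k$, using $\mathbb{F}_2$ linear algebra on the tableau, one of three cases applies. If $\pm Q_k\in\mathcal{T}_{k-1}$, the outcome is deterministic with a sign that is a classically computable function of previous outcomes. If $Q_k$ anticommutes with some generator, the outcome is uniformly random and is sampled classically, with the tableau updated by replacing an anticommuting generator. Otherwise $Q_k$ commutes with all generators but is independent of them; multiplying $Q_k$ by appropriate generators of $\mathcal{T}_{k-1}$ brings it to the form $\epsilon\, I\otimes\tilde P_k$ for some sign $\epsilon$ and Pauli $\tilde P_k\in\cp_t$, and one measures $\tilde P_k$ on $\rho$ to obtain the outcome, then extends the tableau by $r_k Q_k$.

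The main obstacle, requiring the most care, is showing that the extracted measurements $\tilde P_k$ on $\rho$ are pairwise commuting and independent, and that at most $s\leq t$ of them occur, as required by Definition~\ref{pbcdef}. Commutativity descends from the abelian structure of the tableau together with the identity action of the reduced stabiliser factor on the $\rho$-register. Independence and the bound $s\leq t$ follow from a dimension count: the tableau is an abelian subgroup of $\cp_{n+t}$, so it admits at most $n+t$ independent generators; starting from $n$ and gaining at most one per new independent measurement, at most $t$ extractions of $\tilde P_k$ can occur. One must also carefully verify that the reduction $Q_k\mapsto\epsilon\, I\otimes\tilde P_k$ is always achievable in the commuting-independent case, which relies on the symplectic structure of $\cp_{n+t}$ modulo phases and the way the commutator relations with the tableau generators split between the two registers. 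Parts (ii) and (iii) then follow by inspection: in the non-adaptive case the $\tilde P_k$'s are predetermined, yielding a non-adaptive PBC circuit; and postselection of a $Z$-outcome in $C$ propagates, through Clifford propagation and tableau reduction, into postselection of the corresponding $\tilde P_k$-outcome.
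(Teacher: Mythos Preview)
Your overall plan---propagate Cliffords to obtain a sequence of Pauli measurements on $\sigma\otimes\rho$, then reduce these to Pauli measurements on $\rho$ alone---matches the paper's. But the anticommuting branch of your tableau reduction has a genuine gap that breaks the argument.

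In your third case you claim that multiplying $Q_k$ by generators of $\mathcal{T}_{k-1}$ always yields $\epsilon\, I\otimes\tilde P_k$. This holds at the outset, when the tableau is $\langle Z_1,\ldots,Z_n\rangle$, but it can fail once an anticommuting update has replaced some $Z_i$ by a Pauli with nontrivial support on the last $t$ qubits. Take $n=t=1$, tableau $\langle Z_1\rangle$, and measurements $Q_1=X\otimes I$, $Q_2=Z\otimes Z$, $Q_3=X\otimes X$. After your two anticommuting steps the tableau is $\langle r_2\,Z\otimes Z\rangle$; now $Q_3$ commutes with and is independent of this generator, yet neither $X\otimes X$ nor $(X\otimes X)(r_2\,Z\otimes Z)=-r_2\,Y\otimes Y$ has the form $I\otimes\tilde P$. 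The underlying problem is that in the anticommuting case the projection onto the $r_k$-eigenspace of $Q_k$ is, by Lemma~\ref{lemmap}, a Clifford unitary $V$ on the full $(n{+}t)$-qubit state, and $V$ generally acts nontrivially on the $\rho$ register. Your tableau update records which group stabilises the post-measurement state but does not track this rotation of the $t$-qubit register, so the $\tilde P_k$'s you later extract, and the register state you would measure them on, are no longer aligned with the true process.

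The paper's proof supplies exactly the missing idea: in the anticommuting case it does not merely update a tableau but replaces the measurement by the Clifford $V$ of Lemma~\ref{lemmap} and conjugates all later measurements by $V$. This keeps the joint state in the product form $|0\rangle^{\otimes n}\otimes|\psi'\rangle$ throughout and leaves the dummy generators $Z_1,\ldots,Z_n$ untouched, so whenever a later measurement commutes with all previous ones its first-$n$-qubit part is guaranteed to lie in $\{I,Z\}^{\otimes n}$ and Lemma~\ref{twiddle} gives the reduction to $\tilde P$ on the $t$-qubit register. The same mechanism underpins clause (ii): one needs the nontrivial observation (the paper's step (d)) that in the non-adaptive case the anticommuting partner can always be chosen among the dummy $Z_i$'s, whose outcomes are deterministically $+1$, so that the $V$'s introduce no dependence on quantum outcomes; your ``by inspection'' does not address this.
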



We give the proof in full (following the method of \cite{Bravyi2016} and extending the latter for clauses (ii) and (iii) above) dividing it into labelled sections. We begin with two supporting lemmas.

\begin{lemma}\label{lemmap} \cite{Bravyi2016} Let $P, Q\in \cp_n$ be anti-commuting Pauli operations and let $\ket{\psi}$ be an eigenstate of $P$ with $P\ket{\psi}=\lambda_P\ket{\psi}$, $\lambda_P=\pm 1$. Then:\\
(i) Measurement of $Q$ on $\ket{\psi}$ gives result $\lambda_Q=\pm 1$ with equal probabilities half.\\
(ii) The operator $V(\lambda_P,\lambda_Q)=(\lambda_PP+\lambda_QQ)/\sqrt{2}$ is always a unitary Clifford operation.\\
(iii) $V(\lambda_P,\lambda_Q)\ket{\psi}$ is the normalised projection of $\ket{\psi}$ onto the $\lambda_Q$-eigenspace of $Q$.\\
{\em Hence measurement of $Q$ on $\ket{\psi}$ is equivalent to classically choosing (offline) a uniformly random $\lambda \in \{ -1,+1\}$ and applying the Clifford unitary $V(\lambda_P,\lambda)$ to $\ket{\psi}$.}
\end{lemma}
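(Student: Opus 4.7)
The plan is to establish clauses (i), (ii), (iii) in sequence and then combine them via the Born rule to obtain the concluding ``hence'' statement. For (i), I would exploit anticommutativity: since $PQ = -QP$ and $P\ket{\psi} = \lambda_P \ket{\psi}$, we have $P(Q\ket{\psi}) = -QP\ket{\psi} = -\lambda_P(Q\ket{\psi})$, so $Q\ket{\psi}$ lies in the $(-\lambda_P)$-eigenspace of $P$ and is therefore orthogonal to $\ket{\psi}$. Hence $\bra{\psi}Q\ket{\psi} = 0$, and since $Q$ has eigenvalues $\pm 1$ the two outcomes of the measurement must each occur with probability $1/2$.

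For (ii), I would verify unitarity directly: since $P, Q$ are Hermitian, $V V^\dagger = \tfrac{1}{2}(\lambda_P P + \lambda_Q Q)^2$, and the cross term $PQ + QP$ vanishes by anticommutativity while $P^2 = Q^2 = I$, giving $V V^\dagger = I$. To see that $V$ is Clifford, I would factor $V = (\lambda_P P) \cdot \tfrac{1}{\sqrt{2}}(I + iR)$ for some Hermitian self-inverse Pauli $R$; such an $R$ exists because the product $(\lambda_P P)(\lambda_Q Q)$ is anti-Hermitian (by anticommutativity of $P$ and $Q$) and squares to $-I$, so it can be written as $iR$ with $R = -i\lambda_P\lambda_Q PQ$ a Hermitian element of $\cp_n$. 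Then $\tfrac{1}{\sqrt{2}}(I + iR) = \exp(i\pi R/4)$ is the standard $\pi/4$ rotation about a Pauli axis, a well known Clifford operation; composing with the Pauli $\lambda_P P$ keeps us inside the Clifford group.

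For (iii), substituting $P\ket{\psi} = \lambda_P \ket{\psi}$ directly into the definition of $V$ gives $V(\lambda_P, \lambda_Q)\ket{\psi} = \tfrac{1}{\sqrt{2}}(\lambda_P^2 \ket{\psi} + \lambda_Q Q \ket{\psi}) = \sqrt{2}\,\Pi_{\lambda_Q}\ket{\psi}$, where $\Pi_{\lambda_Q} = (I + \lambda_Q Q)/2$ is the projector onto the $\lambda_Q$-eigenspace of $Q$; the prefactor $\sqrt{2}$ is precisely the renormalisation forced by the outcome probability $1/2$ from clause (i), so the right-hand side is exactly the normalised post-measurement state. The concluding equivalence then follows by the Born rule: the $Q$ measurement yields $\lambda_Q \in \{-1, +1\}$ uniformly and leaves the state $V(\lambda_P, \lambda_Q)\ket{\psi}$, which can equivalently be produced by an offline fair coin toss followed by application of the Clifford unitary $V(\lambda_P, \lambda)$. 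I do not anticipate any serious obstacle; the only spot requiring slight care is in (ii), where one must recognise $\tfrac{1}{\sqrt{2}}(I + iR)$ as $\exp(i\pi R/4)$ for a Hermitian self-inverse Pauli $R$, and hence as a standard Clifford generator.
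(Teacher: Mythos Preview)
Your proof is correct. For clauses (i) and (iii) your argument is essentially identical to the paper's: (i) uses the vanishing of $\bra{\psi}Q\ket{\psi}$ (the paper phrases it via the norms $\|\tfrac{1}{2}(I\pm Q)\ket{\psi}\|^2$ but it is the same computation), and (iii) is exactly the substitution $\ket{\psi}=\lambda_P P\ket{\psi}$ into the projector.

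The one genuine difference is in the Clifford verification in (ii). The paper proceeds by brute force: for an arbitrary Pauli $R$ it runs through the four cases of $R$ commuting or anticommuting with each of $P$ and $Q$ and checks in each case that $V R V^\dagger$ is a product of $P$, $Q$, $R$ and hence again Pauli. You instead factor $V=(\lambda_P P)\cdot \tfrac{1}{\sqrt{2}}(I+iR)$ with $R=-i\lambda_P\lambda_Q PQ$ a Hermitian self-inverse Pauli, and recognise $\tfrac{1}{\sqrt{2}}(I+iR)=\exp(i\pi R/4)$ as a standard $\pi/4$ Pauli rotation, which is a known Clifford generator. Your route is more structural and immediately exhibits $V$ in a canonical Clifford form; the paper's case check is more elementary and self-contained (it does not appeal to the fact that $\exp(i\pi R/4)$ is Clifford, which itself amounts to a two-case commutation check). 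Either way the work is comparable.
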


\begin{proof} We have $\ket{\psi}=\lambda_PP\ket{\psi}$.\\ 
For (i) we have 
$ \mbox{Prob\,$(Q$ measurement gives $\pm 1)$} = \left|\left| \frac{1}{2}(I\pm Q)\ket{\psi}\right|\right|^2.   $
Replacing $\ket{\psi}$ by $\lambda_PP\ket{\psi}$, and using the fact that $PQ=-QP$ and that $P$ is unitary, we readily see that the two probabilities are equal.\\
For (ii), using $P^2=Q^2=I$ and $PQ=-QP$ we can check directly that $V(\lambda_P,\lambda_Q)V(\lambda_P,\lambda_Q)^\dagger=I$. Similarly for any Pauli $R$, for each of the four possible combinations of $R$ commuting or anti-commuting with $P$ and $Q$, we can check directly that  $V(\lambda_P,\lambda_Q)\,R\,V(\lambda_P,\lambda_Q)^\dagger$ is a Pauli operation (being just a suitable product of $P$, $Q$ and $R$ in each case).\\
For (iii) the normalised post-measurement state after outcome $\lambda$ is 
\[ \frac{(I+\lambda Q)}{\sqrt{2}}\ket{\psi}=
\frac{(\lambda_PP+\lambda Q)}{\sqrt{2}}\ket{\psi}=V(\lambda_P,\lambda)\ket{\psi}.\hspace{1cm} \]

\end{proof}
We will also use the following fact which is easily checked.

\begin{lemma}\label{twiddle} For any $P=\pm A_1\otimes \ldots \otimes  A_n\otimes B_1\otimes \ldots \otimes B_t \in \cp_{n+t}$ with all $A_i$'s and $B_j$'s being $X,Y,Z$ or $I$, write $\tilde P = \pm B_1\otimes \ldots \otimes B_t \in \cp_{t}$ (with same overall sign as $P$). If $P$ commutes with $Z_1, \ldots ,Z_n \in \cp_{n+t}$ then each $A_i$ is either $Z$ or $I$. If for all $i$, each $A_i$ is either $I$ or $Z$, then for any $t$-qubit state $\ket{\psi}$, the measurement of $P$ on $\ket{0}^{\otimes n}\ket{\psi}$, and the measurement of $\tilde P$ on $\ket{\psi}$, give the same output distributions and corresponding post-measurement states of the form $\ket{0}^{\otimes n}\ket{\psi'}$ and $\ket{\psi'}$ respectively, with the same $t$-qubit states $\ket{\psi'}$.

\end{lemma}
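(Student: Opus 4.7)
The plan is to handle the two assertions independently, both by very direct calculation.

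For the first assertion, I would use the standard fact that on a single qubit $Z$ commutes with $I$ and $Z$ and anticommutes with $X$ and $Y$. Since $Z_i \in \cp_{n+t}$ has $Z$ on the $i$th line and $I$ elsewhere, the commutator $[P, Z_i]$ localises to a tensor factor of $[A_i, Z]$ on the $i$th qubit (times a global sign); two Pauli tensor products commute iff the number of anticommuting tensor factors is even, and here there is exactly one nontrivial factor, so $[P,Z_i]=0$ forces $A_i$ to commute with $Z$, i.e.\ $A_i\in\{I,Z\}$.

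For the second assertion the key computational observation is that $I\ket{0}=Z\ket{0}=\ket{0}$, so whenever every $A_i\in\{I,Z\}$ we have
\[
(A_1\otimes\cdots\otimes A_n)\ket{0}^{\otimes n}=\ket{0}^{\otimes n}.
\]
Carrying the overall sign through, this gives the single identity that does all the work:
\[
P\bigl(\ket{0}^{\otimes n}\ket{\psi}\bigr)=\ket{0}^{\otimes n}\,\tilde P\ket{\psi}.
\]
Then for each measurement outcome $\lambda=\pm 1$ I apply the projector $\tfrac{1}{2}(I+\lambda P)$ to $\ket{0}^{\otimes n}\ket{\psi}$ and use the displayed identity to obtain $\ket{0}^{\otimes n}\,\tfrac{1}{2}(I+\lambda\tilde P)\ket{\psi}$. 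Taking norms gives equal outcome probabilities for measuring $P$ on $\ket{0}^{\otimes n}\ket{\psi}$ and measuring $\tilde P$ on $\ket{\psi}$, and normalising gives the post-measurement states $\ket{0}^{\otimes n}\ket{\psi'}$ and $\ket{\psi'}$ with the same $\ket{\psi'}$, as claimed.

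There is no real obstacle; the only minor bookkeeping item is the overall sign, which is handled by the stipulation that $\tilde P$ is defined to carry the same sign as $P$, so the $\pm$ transfers unchanged through the identity $P\ket{0}^{\otimes n}\ket{\psi}=\ket{0}^{\otimes n}\tilde P\ket{\psi}$.
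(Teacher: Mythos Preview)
Your proposal is correct and is exactly the routine verification the paper has in mind; the paper itself does not spell out a proof for this lemma, stating only that it ``is easily checked.'' Your two-step argument (localising the commutator with $Z_i$ to force $A_i\in\{I,Z\}$, then using $A_i\ket{0}=\ket{0}$ to get $P(\ket{0}^{\otimes n}\ket{\psi})=\ket{0}^{\otimes n}\tilde P\ket{\psi}$ and comparing projectors) is the natural direct calculation and there is nothing to compare.
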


\noindent{\bf Proof of Theorem \ref{pbcthm}}

Let $\mathcal{C}$ be any adaptive circuit whose steps are either unitary Clifford gates or $Z$ measurements, with $K$ measurements in total.  For clarity, we will give the proof for the case where $\sigma$ is the pure state $\ket{0}^{\otimes n}$. The general case of arbitrary (mixed) stabiliser state $\sigma$ is proved similarly by just replacing $Z_1,\ldots , Z_n$ in (b) below by a set of generators $S_1, \ldots S_r$ $(r\leq n)$ of the stabiliser group defining $\sigma$.

{\bf (a)} Starting with the rightmost Clifford gate and working successively to the left, we commute each gate out to the end of the circuit beyond the last measurement. As a result each $Z$ measurement will become conjugated into a Pauli measurement $P_i\in \cp_{n+t}$ which may be efficiently determined. Unitary gates applied after the measurements have no effect on the outcomes so we delete them, and we are left with a sequence  $P_1, P_2, \ldots , P_K$ of (generally adaptive) Pauli measurements (where $s$ is the number of $Z$ measurements in $\mathcal{C}$), acting on input  state
$\ket{0}^{\otimes n}\otimes \rho$.\\
Remark on (a): we could instead commute out the Clifford gates in sections, interleaved with the process to be described in (c) below, as follows. As we consider each successive measurement $Q_i$ of the original circuit in turn (working from the leftmost one) we commute only the Clifford gates on the left of $Q_i$ to the right of it, and staying to the left of the next measurement, to obtain $P_i$ as above, and then apply (c) to $P_i$. All gates are thus eventually commuted out beyond the last measurement as we consider each measurement in turn. This commuting process interleaved with (c) has the advantage that for adaptive gates (depending on previous measurement outcomes) the identity of the gate is always fixed before it is commuted to the right, and we never need to carry forward any variables of adaptation.

{\bf (b)} Next we prefix the sequence in (a) with ``dummy'' $Z$ measurements for each of the first $n$ lines obtaining the list
\[ {\rm (LIST):}\hspace{5mm} Z_1, Z_2, \ldots , Z_n, P_1, P_2, \ldots , P_K. \]   
This has no effect as the input is $\ket{0}$ on each of these lines (and the $Z$ measurements all give result $+1$ with certainty).

{\bf (c)} We now define our PBC process. We have a $t$-qubit register initially in state $\rho$. Looking at (LIST) in (b) we work successively through the $P_j$'s starting with $P_1$(not the dummy $Z$'s). For each $P_j$:\\
(i) If $P_j$ is dependent on measurements already performed (which may be efficiently determined  \cite{Nielsen2010}), delete $P_j$ from (LIST) and just calculate its outcome from previous recorded measurement results. Move to the next measurement in (LIST).\\
(ii) If $P_j$ commutes with all measurements to the left in (LIST) (including the dummy $Z$'s too), measure $\tilde P_j$ (as in Lemma \ref{twiddle}) on the register and record its value $\lambda_{P_j}$. Then move to the next measurement in (LIST).\\
(iii) If $P_j$ anticommutes with some measurement $N$ (possibly a dummy $Z$) on the left (which had outcome $\lambda_N$),  classically randomly choose $\lambda_{P_j}\in \{ +1,-1\}$ and record it. Then delete $P_j$ from (LIST) and replace it by the unitary Clifford $V(\lambda_N,\lambda_{P_j})$ (as in Lemma \ref{lemmap}). Then update (LIST) by commuting out $V(\lambda_N,\lambda_{P_j})$ to the right. By Lemma \ref{lemmap} this process simulates the $P_j$ measurement and its post-measurement state for subsequent measurements. Then move to the next measurement in (LIST).

It is clear that when we have treated all $P_j$'s in (LIST) we will have performed a list of $s\leq K$ measurements on the $t$-qubit register, which are independent and commuting Pauli measurements (the only quantum action on the register occurring in (ii)), and this process is assisted by efficient randomised classical computation. Since the measurements are all independent and commuting, we must have $s\leq t$.

Independently of actually implementing the measurements on the quantum register, the process described in (c) above provides an efficient classical (generally randomised) procedure which, given a sequence of measurement outcomes $m_1, \ldots , m_l$ up to any stage $l$, determines the next quantum measurement that's guaranteed to be independent of all previous measurements and commuting with them i.e. a bonafide PBC circuit. This completes the proof of Theorem \ref{pbcthm}(i).

{\bf (d)} We now prove Theorem \ref{pbcthm}(ii). If $\mathcal{C}$ is non-adaptive then we may assume without loss of generality that it is a unitary circuit $U$ followed by final measurements $Z_{i_1}, \ldots , Z_{i_s}$ on specified qubit lines $i_1, \ldots , i_s$ \cite{Jozsa2013}. Then in (b) we will obtain the non-adaptive list $ Z_1, Z_2, \ldots , Z_n, P_1, P_2, \ldots , P_s. $
Here $P_k=UZ_{i_k}U^\dagger$ for $k=1, \ldots , s$, which are commuting and independent. However some may anticommute with an initial dummy $Z$ measurement. Then following the process of (c)(iii) (with $P_j$ and $N$ as in (c) above), $N$ must be one of the dummy $Z$'s, whose measurement outcome $\lambda_N=+1$ is deterministic. Thus the unitary gate $V(\lambda_{P_j}, \lambda_N)$ involves no adaptations, and the sequence remains non-adaptive after $V(\lambda_{P_j}, \lambda_N)$ is commuted out to the end (although it depends on the classical random choice of $\lambda_{P_j}$ that can have been chosen a priori). Continuing in this way, we note that if any subsequent updated operator $M$ anticommutes with any earlier operator $N$, then $M$ must always anticommute with one of the dummy $Z$'s too. This is because at any iteration stage, the operators after the dummy $Z$'s are given by initial $P_i$'s conjugated some number of times by operators $V$ that are always in the algebra generated by the $P_k$'s and dummy $Z$'s (i.e. the successive $V$'s that have been commuted out). Thus if $M$ commuted with all the dummy $Z$'s, it must also commute with all preceding operators $N$ (recalling that the $P_k$'s were all commuting).

Now by choosing an anticommuting $N$ to always be a dummy $Z$, $\lambda_N$ will always be $+1$ and no adaptation is ever introduced by (c)(iii) so, since the initial list of $P_i$'s was non-adaptive, the final PBC process will be non-adaptive too. This proves Theorem \ref{pbcthm}(ii). 

{\bf (e)} Finally we prove Theorem \ref{pbcthm}(iii). In the case of postselection we proceed with all the steps as above as though there was no postselection, except (c)(iii). Suppose that the measurement $P_j$ in that step is postselected to outcome $+1$. In that case, do not randomly choose $\lambda_{P_j}$, but set it to $\lambda_{P_j}=1$. Replacing $P_j$ with $V(\lambda_N,1)$ will produce the same post measurement state as postselecting $P_j$ on outcome $+1$.  If a dependent measurement's determined outcome (as in (c)(i)) is inconsistent with an imposed postselection at that stage, then this indicates that the postselection requirement of the original circuit had probability zero. This results in a PBC process, some of whose measurements (arising from (c)(ii)) may still be postselected,
 completing the proof of Theorem \ref{pbcthm}(iii). \,$\Box$

\subsection{Proof of the extended Gottesman-Knill theorem}\label{egkproof}

A PBC circuit with general input state $\rho$ is similar to an adaptive Clifford circuit albeit with no unitary gate steps, except that the measurements are general Pauli measurements rather than just elementary $Z$ measurements. Correspondingly our extended Gottesman-Knill Theorem \ref{EGK} is obtained as a translation of Theorem \ref{pbcthm} into a standard circuit form.


\noindent {\bf Proof of Theorem \ref{EGK}} \\ 
According to Theorem \ref{pbcthm}(i), $\mathcal{C}$ can be weakly simulated by a PBC circuit of Pauli measurements $\tilde P_1,...,\tilde P_s$ on input state $\rho$, and we just need to translate this back into an adaptive Clifford circuit with only $Z$ basis measurements. This follows immediately by applying lemma \ref{umap} below to each $\tilde P_i$ separately, expressing it as $\tilde P_i=U_i^\dagger Z_k U_i$ for unitary Clifford operations $U_i$ and any choice of line $k$ (which could even be independent of $i$), thus establishing (i) and (iii). 

Note that the Lemma cannot be applied to all $\tilde P_i$ simultaneously (giving a single $U$) since although pairwise commuting and independent, they are generally adaptively determined and not fixed a priori. 
However if $\mathcal{C}$ is non-adaptive then according to Theorem \ref{pbcthm}(ii), the sequence $\tilde P_1,...,\tilde P_s$ can be chosen to be non-adaptive. Lemma \ref{umap} can then be applied to the whole list to give a single $U$ with $U^\dagger Z_k U= \tilde P_k$ for $k=1,\ldots ,s$. The circuit $\mathcal{C}^*$ is then just the unitary Clifford $U$ (as unitaries after the $Z$ measurements have no effect and can be deleted), thus establishing (ii).

\begin{lemma} \label{umap}
Let $\{P_1,...,P_m\}$ be any set of independent and pairwise commuting Pauli operations on $n$ qubits (so $m\leq n$).  
Then there is a unitary Clifford operation $U$ such that $U^\dagger Z_k U= P_k$ for $k=1,\ldots ,m$. Furthermore a circuit of basic Clifford gates of depth $O(n^2/log(n))$ implementing $U$ may be determined in classical poly$(n)$ time.  \end{lemma}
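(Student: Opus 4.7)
The plan is to translate the problem into the symplectic representation of the Pauli group and then realise $U$ as the Clifford lift of a symplectic change of basis. Modulo phase, the operators in ${\cal P}_n$ correspond bijectively to vectors in $\mathbb{F}_2^{2n}$ in such a way that two Paulis commute exactly when their vectors are orthogonal with respect to the standard symplectic form. The hypothesis that $P_1,\ldots,P_m$ are independent and pairwise commuting becomes the statement that their associated vectors form a basis of an $m$-dimensional isotropic subspace $L$, while $Z_1,\ldots,Z_m$ correspond to a basis of a canonical isotropic subspace $L_0$; the goal is then to exhibit a Clifford $U$ whose symplectic action carries $L_0$ pointwise onto $L$ as $Z_k\mapsto P_k$.

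First I would extend $P_1,\ldots,P_m$ to a full symplectic basis of the Pauli group, i.e.\ find additional Paulis $P_{m+1},\ldots,P_n$ and partners $Q_1,\ldots,Q_n$ such that $Q_iP_i=-P_iQ_i$ while all other pairs among the $P_i,Q_j$ commute. Non-degeneracy of the symplectic form guarantees this is possible, and a straightforward $\mathbb{F}_2$-valued Gram--Schmidt iteration (at each step pick any vector non-orthogonal to the current generator, subtract off unwanted inner products, move on) produces such a basis in classical polynomial time. The assignment $Z_i\mapsto P_i,\, X_i\mapsto Q_i$ is then an automorphism of the standard Pauli generators preserving all commutation and anticommutation relations, and by the standard fact that the Clifford group modulo its Pauli centre surjects onto the full symplectic group on $\mathbb{F}_2^{2n}$, this automorphism is realised by conjugation by some Clifford unitary $U$. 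Any residual sign discrepancies between $U^\dagger Z_iU$ and $P_i$ can be absorbed by multiplying $U$ on the left by an appropriate Pauli correction, which remains Clifford.

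For an explicit circuit I would use the standard tableau-reduction algorithm: process the generators $P_k$ one at a time, first using Hadamards and phase gates to convert the local Pauli type on each line to either $Z$ or $I$, then a block of CNOTs on the $n$ lines to clear the remaining $Z$-support off all lines other than line $k$, possibly with a final swap to put the surviving $Z$ on line $k$. After $m\leq n$ such rounds the updated generators are exactly $Z_1,\ldots,Z_m$, and inverting the composed Clifford gives the desired $U$. Each round contributes $O(n)$ single-qubit gates plus one CNOT sub-circuit on $n$ qubits. By the Patel--Markov--Hayes theorem, any CNOT transformation on $n$ qubits can be compiled into a circuit of depth $O(n/\log n)$, so stacking the $m\leq n$ rounds gives overall depth $O(n^2/\log n)$, with the entire construction computable in classical poly$(n)$ time from a description of the $P_k$ (e.g.\ their $\mathbb{F}_2^{2n}$ vectors).

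The main obstacle is only the depth bound: the existence of $U$, its polynomial-time construction, and even a circuit realisation of depth $O(n^2)$ are routine consequences of the symplectic picture, but shaving the extra $\log n$ factor per round requires explicitly invoking the Patel--Markov--Hayes parallel decomposition of CNOT circuits (or an equivalent stabiliser-circuit depth optimisation). Everything else is bookkeeping in $\mathbb{F}_2^{2n}$.
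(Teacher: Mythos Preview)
Your existence argument is essentially the paper's: extend $P_1,\ldots,P_m$ to a full independent commuting family $P_1,\ldots,P_n$, adjoin anticommuting partners $Q_1,\ldots,Q_n$ (the paper calls these the ``destabilisers'' and cites \cite{Aaronson2004,yoder}), and use the Clifford--symplectic correspondence to realise $Z_i\mapsto P_i,\ X_i\mapsto Q_i$ by conjugation. The sign fix-up by a Pauli on the left is also standard. So for the first two claims (existence of $U$ and poly-time constructibility) your route and the paper's coincide.

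The depth bound is where your argument slips. Patel--Markov--Hayes is a \emph{size} result: any CNOT transformation on $n$ qubits admits a circuit of $O(n^2/\log n)$ CNOT \emph{gates}; it does not give depth $O(n/\log n)$. With your round-by-round scheme, each clearing step is a specific $O(n)$-gate CNOT fan, so $m\le n$ rounds give only an $O(n^2)$ bound on size (and hence on depth), not $O(n^2/\log n)$. Invoking PMH on each round as if it were an arbitrary CNOT block would make things worse, not better, since PMH would then contribute $O(n^2/\log n)$ gates \emph{per round}. The paper avoids this entirely: once the target tableau $(Z_i\!\mapsto\!P_i,\ X_i\!\mapsto\!D_i)$ is known, it applies Theorem~8 of Aaronson--Gottesman to the \emph{single} resulting Clifford, which directly yields a circuit of $O(n^2/\log n)$ gates (hence depth at most that). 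If you want to keep your constructive reduction, the clean fix is the same: compose your rounds into one Clifford tableau and then hand that tableau to the Aaronson--Gottesman synthesis, rather than bounding depth round by round.
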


\begin{proof}
We first extend the set $\{P_1,...,P_m\}$ to a maximally sized set $\{P_1,...,P_n\}$ of independent  pairwise commuting Pauli operations.  This extension is not unique, but see Section 7.9 of \cite{preskill} for an efficient method of extension. Using similar techniques we also find generators of the `destabiliser group' $\{D_1,...,D_n\}$ (defined in \cite{Aaronson2004, yoder}). Then there is a unique (up to phase) Clifford $V$ such that $V Z_i V^\dagger= P_i$ and $V X_i V^\dagger= D_i$ for $i=1,\ldots , n$. An $O(n^2/log(n))$ circuit implementing $V$ may be determined in classical poly$(n)$ time by  the construction of Theorem 8 in \cite{Aaronson2004}. Finally take $U=V^\dagger$.
\end{proof}


%
%

\section{Clifford magic (CM) circuits}\label{cm}

We introduce a class of quantum processes that we call ``Clifford Magic", written CM.

\begin{definition}
A CM circuit on $t$ qubits is a unitary Clifford circuit which has input state $\ket{A}^{\otimes t}$, and output given by the result of measuring $r$ specified qubits (the output register $\co$) in the $Z$ basis (and intermediate measurements are not allowed). A postselected CM circuit is a CM circuit with an additional register $\cp$ of $s$ qubits (called the postselection register) disjoint from $\co$, which is also measured at the end.\, $\Box$
\end{definition}

Our motivation for introducing and studying CM circuits is twofold. The first reason, discussed in Subsection \ref{classsimresults}, relates CM processes to known classical simulation results. In particular, we show that the class of CM circuits is equivalent to a class of quantum circuits likely to have supra-classical power while also being weaker than BQP. 
Our second motivation, discussed in Subsection \ref{exper}, is that CM circuits are a promising candidate for experimentally verifying quantum advantage. Unlike other quantum supremacy proposals, small amounts of error correction can be readily included with modest overheads. Furthermore,  adding adaptive measurements to CM processes makes the class universal while also providing an economy in the number of qubits needed, as described previously in Figure 1. In this way CM circuits may be viewed as a practicable stepping stone towards an implementation of universal quantum computation. 

\subsection{Relation between CM and known classical simulation results} \label{classsimresults}

Consider circuits of the form shown in Figure 2. The circuits on the left comprise unitary Clifford gates with input $\ket{0}^{\otimes n}\ket{A}^{\otimes \poly(n)}$ and one line being measured for the output. Such circuits are known to be classically simulatable \cite{Jozsa2013}. On the other hand, if intermediate $Z$ measurements are allowed together with adaptations, the circuits can perform $T$-gadgets making them universal for BQP computations, as shown on the right.

\begin{figure}[h!] 
\label{f}
    \begin{center}
    \includegraphics[trim=25 600 -10 50,clip,width=\textwidth]{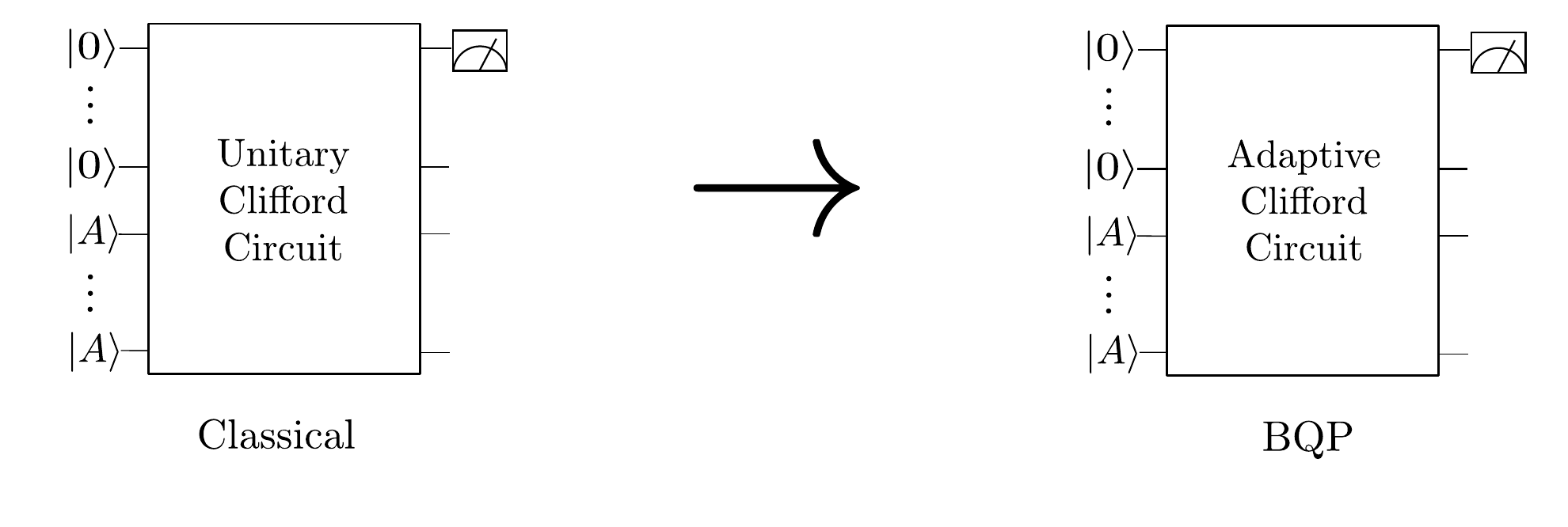}
    \caption{The circuits on the left have magic states as well as stabiliser inputs. However, if a unitary Clifford circuit is applied and only one line is measured, it is classically simulatable. On the other hand, if intermediate $Z$ measurements are included and the circuit is allowed to adaptively depend on measurement outcomes, then the circuit can perform any BQP computation.}
    \end{center}
   
\end{figure}

 Consider now the family of all Clifford circuits with input $\ket{0}^{\otimes n}\ket{A}^{\otimes \poly(n)}$ and one line being measured for the final output, and allowing intermediate measurements. Let ${\cal M}_I$ denote the set of intermediate measurement results obtained. Then we can consider ${\cal M}_I$ being used in one of the following three ways:\\
(A)  Discarding  ${\cal M}_I$, and not using it in any way (either for output or for adaptations).\\
(B)  Retaining ${\cal M}_I$ as part of the output (but not used otherwise). \\
(C) Using ${\cal M}_I$ as it emerges for subsequent adaptation in the course of the process, as well as giving ${\cal M}_I$ as part of the output.\\ 
Circuits of the form (C) can perform any BQP computation, but those of the form (A) are classically simulatable \cite{Jozsa2013}. 
Case (B) is not expected to have the full power of BQP. But furthermore, using the methods of \cite{Jozsa2013} (cf especially Theorems 6 and 7 therein, and under plausible complexity conjectures) case (B) is also not classically simulatable exactly (in either the strong or weak sense). In this work (cf Section \ref{hard}) we will show that additionally, it is also not classically simulatable up to multiplicative or additive error either (under plausible conjectures).


Case (B) is clearly intermediate between (A) and (C). Indeed (C) allows the extra capability over (B) of adaptation, and compared to (A), retaining ${\cal M}_I$ in (B) gives more information about the final state which in (A) would be assigned as the probabilistic mixture of all post-measurement states arising from all the possible outcome values for ${\cal M}_I$.


The class of CM circuits is clearly a subset of the class of circuits in case (B) viz. those with no $\ket{0}$ part in the input and all measurements being performed only at the end. However, the CM subset is in fact equivalent to the full class in (B): every circuit in the latter can be weakly simulated by a CM circuit, as follows by an application of the Extended Gottesman--Knill theorem. As the intermediate measurements in case (B) are not adaptive, Theorem \ref{EGK}(ii) tells us that the resulting compressed circuit is a CM circuit. 

In this sense the computational power of the class of CM circuits relates directly to the power of retaining intermediate measurements in a Clifford circuit. We prove in Section \ref{hard} that CM circuits cannot be classically simulated (up to multiplicative or additive error) under plausible conjectures, showing that the mere retention of intermediate measurement results as above, can be regarded as a kind of ``quantum resource'', elevating the classically simulatable case (A) to supra-classical computing power in (B).

\subsection{Experimental advantages of CM circuits} \label{exper}

CM circuits offer several advantages for fault tolerant implementation and for implementation in the MBQC model, inherited in part from such benefits for Clifford circuits. 

\subsubsection{Fault tolerance for CM circuits}


In the circuit model, fault tolerance is often achieved by replacing $T$ gates by $T$ gadgets, with magic state distillation being used to create high fidelity $|A\rangle$ states offline \cite{Bravyi2004}. However, as $T$ gadgets include adaption, the circuit cannot be fully created in advance, and instead part of the circuit must be created in real time. These potentially increase the required coherence times. CM do not require these kinds of adaptions, even when made fault tolerant using a stabiliser code. 

Syndrome measurements and their associated correction operations may appear to introduce further adaptations into the circuit, but these can in fact be avoided. Indeed these corrections are Pauli operations, and can always be commuted past Clifford unitaries and (Pauli) syndrome measurements, since the Pauli measurements, at most, swap sign when conjugated by the Pauli corrections. Then the Pauli corrections can be accounted for after the quantum computation is completed via simple classical processing of the measurement outcomes. 

A further benefit of CM circuits being Clifford circuits is that any such circuit on $t$ qubit lines can be expressed as a circuit of depth bounded by $O(t^2/\log t)$ \cite{Aaronson2004}, again providing potential benefits for shorter coherence times in implementation. 
 
\subsubsection{CM circuits in the MBQC model}

In our discussion below we will assume the following standard form of MBQC (cf for example \cite{Danos2007}). The starting resource state is the standard cluster state. $CZ$ operations in circuits are implemented by exploiting $CZ$'s that were used in the construction of the cluster state. 1-qubit measurements applied to the cluster state are either $Z$ measurements or else $M(\alpha)$ measurements in the basis  $\{|\pm_{\alpha}\rangle\}$, where $|\pm_{\alpha}\rangle=1/\sqrt{2} (|0\rangle \pm e^{-i \alpha} |1\rangle)$. The latter provide implementation of 1-qubit gates  $J(\alpha)=H (|0\rangle \langle 0|+ e^{i \alpha} |1\rangle \langle 1|)$, appearing as $X^sJ(\alpha)$ where $s=0,1$ is the measurement outcome and $X^s$ is the associated byproduct operator. The $J(\alpha)$ gates together with $CZ$ provide a universal set.

\begin{theorem}
A CM circuit $\mathcal{C}$ including preparation of its input $| A \rangle^{\otimes t}$, can be implemented in the MBQC model in depth 1. 
\end{theorem}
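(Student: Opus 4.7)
The plan is to exhibit an explicit MBQC pattern on the standard cluster state, together with a classical post-processing rule, that realises $\mathcal{C}$ using no adaptive measurement angles, so that all qubits of the pattern can be measured in a single parallel round. The pattern is built in three parts. (i) To each of the $t$ input wires of $\mathcal{C}$ I attach a small preparation gadget consisting of two extra cluster qubits, to be measured at angles $\pi/4$ and $0$ respectively; a direct calculation shows that $J(0)J(\pi/4)|+\rangle = |A\rangle$, so the gadget teleports $|A\rangle$ onto the input wire up to a Pauli byproduct. (ii) The Clifford part $C$ of $\mathcal{C}$ is then implemented on the remaining cluster qubits by the standard MBQC realisation of a Clifford network, i.e.\ single-qubit measurements at angles drawn from $\{0,\pm\pi/2\}$. (iii) The output register is measured in the $Z$ basis.

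The heart of the argument is that no measurement basis in this pattern needs to depend adaptively on any other outcome. The only non-Clifford measurements are the $M(\pi/4)$'s inside the preparation gadgets, and these sit at the very start of the MBQC flow, so there is no earlier measurement whose byproduct could demand the angle flip $\pi/4 \mapsto -\pi/4$. Every other measurement is at a Clifford angle $\alpha \in \{0,\pm\pi/2\}$. For such $\alpha$ the pair of basis projectors $\{|+_\alpha\rangle\langle +_\alpha|,|-_\alpha\rangle\langle -_\alpha|\}$ and $\{|+_{-\alpha}\rangle\langle +_{-\alpha}|,|-_{-\alpha}\rangle\langle -_{-\alpha}|\}$ coincide as sets, differing only in which outcome is labelled $+$; hence the byproduct-induced angle flip $\alpha \mapsto (-1)^s\alpha$ that would ordinarily be demanded can be absorbed into a purely classical XOR of the recorded outcome with $s$, rather than into a change of the physical measurement. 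The same is trivially true for the final $Z$-basis measurements. All physical measurement bases are therefore fixed before the run starts, so the whole pattern can be executed in one simultaneous round.

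It remains to check correctness by tracking the byproducts. After the preparation gadgets have fired, the joint state on the $t$ input wires is $P|A\rangle^{\otimes t}$ for some Pauli $P\in\mathcal{P}_t$ computable in classical polynomial time from the gadget outcomes via the standard MBQC byproduct rules (a single-wire gadget contributes a factor of the form $X^{s_2}Z^{s_1}$). Applying the Clifford network then gives $CP|A\rangle^{\otimes t} = (CPC^{\dagger})\,C|A\rangle^{\otimes t} = P'\,C|A\rangle^{\otimes t}$, where $P' = CPC^{\dagger}$ is again a Pauli and is obtained from $P$ by Heisenberg-picture propagation through a circuit description of $C$. A $Z$-basis measurement of $P'C|A\rangle^{\otimes t}$ differs from one on $C|A\rangle^{\otimes t}$ only by bit flips at the lines where $P'$ carries an $X$- or $Y$-factor, and the classical post-processing XORs these flip bits into the raw outcomes to recover the target distribution of $\mathcal{C}$. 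The one piece of bookkeeping that warrants care is the verification that no non-Clifford measurement has any other measurement in its MBQC past; this is automatic here because every non-Clifford measurement lives in an initial input-preparation gadget and $\mathcal{C}$ itself contains no non-Clifford gate.
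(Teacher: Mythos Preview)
Your proof is correct and follows essentially the same approach as the paper's. Both arguments prepare $|A\rangle$ via $J(\pi/4)$ applied to $|+\rangle$ (you use the two-step gadget $J(0)J(\pi/4)$, the paper absorbs the trailing $H=J(0)$ into the Clifford block), observe that the sole non-Clifford measurement $M(\pi/4)$ sits at the very start of the flow and hence is never adapted, and then invoke the standard fact that Clifford-angle measurements require no adaptation because their Pauli byproducts can be propagated to the end and absorbed into classical post-processing. Your version is somewhat more explicit in spelling out why Clifford angles are adaptation-free (basis invariance under $\alpha\mapsto-\alpha$) and in tracking the final byproduct $P'=CPC^\dagger$, but the underlying argument is the same.
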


\begin{proof}
Note first that  $|A\rangle = H J(\pi/4) |+\rangle$. Thus $\mathcal{C}$ may be viewed as having input $|+\rangle$ on all lines, followed by a round of $J(\pi /4)$ gates, followed by Clifford gates (comprising a round of $H$ gates followed by the gates of $\mathcal{C}$). Hence for MBQC implementation the measurement pattern comprises a line of $M(\pi /4)$ measurements laid out next to  implementations of Clifford gates. The $X^s$ byproducts of the $M(\pi /4)$ measurements can be commuted over the Clifford gates to the end, without incurring any adaptations. Similarly it is well known \cite{RBB2003} that Clifford circuits can be implemented without adaptation to the byproduct operators that arise. Hence the entire measurement pattern is non-adaptive and can be implemented in depth 1.
\end{proof}

Miller et al. \cite{Miller2017} also propose a scheme for quantum supremacy without error correction that is depth $1$ in MBQC, based on use of MBQC to simulate IQP circuits. Their scheme requires a nonstandard resource state that may not be simple to prepare, whereas our proposal uses the standard cluster state, which is a stabiliser state, as the resource. Furthermore our scheme can be made fault tolerant as follows. 
\begin{theorem}

A CM circuit $\mathcal{C}$ can be implemented fault tolerantly in the MBQC model in depth 1, given a particular initial resource state that can be created offline with high fidelity. 
\end{theorem}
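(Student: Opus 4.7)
The plan is to reduce the fault-tolerant case to the previous (non-fault-tolerant) depth-1 result by bundling the entire resource preparation — cluster-state graph plus magic-state inputs — into a single offline stage, performed at the encoded level of a stabiliser code. Choose any stabiliser code $\mathcal{E}$ that admits fault-tolerant preparation of stabiliser states, transversal (or otherwise fault-tolerant) logical Clifford gates, and transversal logical Pauli measurements (for instance the surface code, the Steane code, or a 2D colour code). Take the depth-1 MBQC measurement pattern from the previous theorem that implements $\mathcal{C}$ on the standard cluster state with input $\ket{A}^{\otimes t}$; let $\Phi$ denote the corresponding resource state, i.e.\ the cluster state with $\ket{+}$ replaced by $\ket{A}$ on the $t$ input qubits. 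The ``particular initial resource state'' of the theorem will be the $\mathcal{E}$-encoded version $\Phi_L$ of $\Phi$.

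The offline preparation of $\Phi_L$ proceeds in two parts. The graph-state portion of $\Phi$ is a stabiliser state, so its encoded counterpart can be prepared fault tolerantly by standard encoded graph-state preparation (encoded $\ket{+}$ states on each logical qubit, then encoded $CZ$s, all at the logical level). The encoded magic states $\ket{A}_L^{\otimes t}$ are the sole non-stabiliser ingredient and are produced offline via magic-state distillation to the desired fidelity. Finally, encoded $CZ$s entangle the $\ket{A}_L$ inputs with the encoded graph. Because $\Phi_L$ is a fixed state independent of the particular output ultimately sampled, this entire procedure is offline; errors can be detected and the preparation restarted until high-fidelity $\Phi_L$ is certified.

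The online phase then consists of one round of single-qubit logical measurements on $\Phi_L$, following the same non-adaptive MBQC pattern as in the previous theorem (recall that there all physical measurements are either $Z$ or $M(\pi/4)$, but here the $M(\pi/4)$ measurements have been absorbed into the encoded $\ket{A}_L$ inputs, so only logical Pauli measurements remain). Each logical Pauli measurement is realised transversally as a simultaneous round of physical single-qubit Pauli measurements on its code block, so all logical measurements execute in parallel: MBQC depth 1. Syndromes for each block are read off the same physical outcomes (the standard CSS-code trick under transversal Pauli measurement), and the corresponding Pauli corrections, together with the MBQC byproduct operators, are Pauli and commute through all subsequent (classically simulated) Clifford tracking. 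Hence, exactly as in the non-fault-tolerant case, no feedback or adaption is ever applied to the quantum register.

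The main obstacle is verifying that the single online measurement round is genuinely fault tolerant, since the usual fault-tolerance argument involves repeated rounds of syndrome extraction. The resolution is that all noise locations on the online side are single-qubit measurements on an already-encoded state, and the outcomes themselves furnish the syndrome data for one round of error correction; any residual errors are absorbed into the fidelity of $\Phi_L$, which can be driven arbitrarily high by the offline stage (below threshold for $\mathcal{E}$ and with sufficiently distilled magic states). Thus the guarantee of fault tolerance is ultimately inherited from the offline preparation of $\Phi_L$, which is precisely the ``high fidelity offline'' resource state asserted in the theorem.
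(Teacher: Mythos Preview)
Your proposal is correct and follows the same overall strategy as the paper: push the magic-state preparation and the encoded graph-state construction entirely offline, so that the online MBQC pattern implements only the Clifford part of the circuit via transversal (hence single-qubit, non-adaptive) Pauli measurements, giving depth~1.

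The one genuine technical difference concerns syndrome extraction in the online phase. The paper keeps explicit syndrome-measurement gadgets in the MBQC pattern: the required ancillas and Clifford circuitry are absorbed into the offline resource state, and the syndrome measurements are then themselves realised as further single-qubit Pauli measurements in the same depth-1 layer. You instead read syndromes directly off the physical outcomes of the transversal logical-Pauli measurements (the destructive ``CSS trick''). Your route is leaner --- no extra ancilla blocks in the resource state --- but it leans on the code admitting simultaneous syndrome recovery from a single transversal Pauli round in each of the $X$, $Y$, $Z$ bases, which is clean for self-dual CSS codes like Steane but requires more care for, e.g., the surface code in the $Y$ basis. The paper's route is bulkier but manifestly code-agnostic and more directly mirrors standard fault-tolerant constructions. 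Either way the conclusion stands.
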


\begin{proof}

For simplicity, we will consider a fault tolerance scheme using the 7-qubit Steane code. The initial resource  state can be created as follows. Create an encoded magic state $\tilde{\ket{A}}^{\otimes t}$. Create the other parts of the encoded graph state by making the encoded states $|\tilde{+}\rangle$ and using the encoded version of $CZ$. The usual syndrome measurements and corrections are required during this process. Inclusion of  $\tilde{\ket{A}}^{\otimes t}$ into the resource state allows us to avoid a later need for implementing encoded $M(\pi /4)$ measurements fault tolerantly, and our CM circuit is a circuit of only Clifford gates. Now we have $H=J(0)$ and $S=HJ(\pi/2)$, with $M(0)$ and $M(\pi/2)$ being $X$ and $Y$ measurements respectively. Thus in MBQC, Clifford gates are implemented using only Pauli measurements, and in our encoded setup we need to apply their corresponding fault tolerant encoded versions. These are transversal. Furthermore, syndrome measurements can be carried out using the usual fault tolerant construction in terms of Clifford operations and ancillas. These Clifford gates themselves can be implemented using MBQC using ancillas. All these ancillas are included in the initial state. Hence every physical operation applied to the initial state is a $1$ qubit Pauli measurement. Then, as before, Pauli errors can be corrected via classical post processing, and so the circuit is depth $1$. 
\end{proof}

%
%

\section{Hardness of classical simulation of CM circuits} \label{hard}

We now establish lower bounds on the complexity of classical simulation of CM circuits, allowing either multiplicative or additive errors in the simulation. The scenario of additive error is generally regarded as a reasonable model of what is feasible to physically implement in practice.

A distribution $q(x)$ is an $\epsilon$-additive approximation of a distribution $p(x)$ if
\begin{equation} \label{defadd}
\sum_x |p(x)-q(x)| \leq \epsilon.
\end{equation}
A number $Y$ is an $\epsilon$-multiplicative approximation of a number $X$ if $|X-Y|\leq\epsilon X$. A distribution $q(x)$ is an $\epsilon$-multiplicative approximation of a distribution $p(x)$ if for each $x$, $q(x)$ is an $\epsilon$-multiplicative approximation of $p(x)$. Thus clearly $\epsilon$-multiplicative approximation of distributions implies $\epsilon$-additive approximation.

\subsection{Hardness of classical simulation of CM with multiplicative error} \label{clsimmulti}
Although (uniform families of) CM circuits themselves are not likely to be universal for quantum computation, we first establish that postselected CM circuits suffice as a quantum resource for postselected universal quantum computation. Using the arguments of Ref \cite{Bremner2011}, this is enough to establish that the class cannot be classically simulated to multiplicative error without causing the Polynomial Hierarchy (PH) to collapse. 




\begin{theorem} \label{postselection}
Any postselected poly-sized unitary quantum circuit $\mathcal{C}$ on $n$ qubits (with final $Z$ measurements) can be weakly simulated by a postselected poly-sized CM circuit on poly$(n)$ qubits. 
\end{theorem}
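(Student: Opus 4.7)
The plan is to reduce any postselected poly-sized unitary quantum circuit to a postselected CM circuit in two stages: first rewrite $\mathcal{C}$ as a Clifford$+T$ circuit on $n+t$ qubits, then apply the Extended Gottesman--Knill theorem to eliminate the stabiliser part of the input. Both stages are classically efficient.

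First I would compile $\mathcal{C}$ into a poly$(n)$-sized circuit of Clifford and $T$ gates (standard, since Clifford$+T$ is universal), using some $t=\poly(n)$ copies of $T$. I would then replace each $T$ gate by a $T$-gadget that uses one ancilla prepared in $\ket{A}$, a $CX$ with the ancilla as target, and a $Z$ measurement on the ancilla; crucially, I postselect each such intermediate measurement on outcome $+1$, so that no adaptive $S$ correction is required and the gadget becomes a piece of purely Clifford circuitry with a single postselected $Z$ measurement. The resulting circuit $\mathcal{C}'$ has input $\ket{0}^{\otimes n}\otimes\ket{A}^{\otimes t}$, consists solely of Clifford gates, contains $t$ postselected intermediate $Z$ measurements arising from the $T$-gadgets, and ends with the final $Z$ measurements of $\mathcal{C}$ (some of which are postselected to encode the postselection of the original circuit). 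Since no gate in $\mathcal{C}'$ is ever conditioned on a measurement outcome, $\mathcal{C}'$ is non-adaptive in the sense of Definition~\ref{adaptcircuit}, despite having postselected measurements.

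Second, I would apply the Extended Gottesman--Knill Theorem~\ref{EGK} to $\mathcal{C}'$, taking $\sigma=\ket{0}^{\otimes n}$ and $\rho=\ket{A}^{\otimes t}$. Combining clauses (ii) and (iii), non-adaptivity together with the postselections yields a weakly simulating unitary Clifford circuit $\mathcal{C}^*$ on $t$ qubits with input $\ket{A}^{\otimes t}$ and final $Z$ measurements, some of which are postselected to outcome $+1$. This is precisely a postselected CM circuit on $t=\poly(n)$ qubits, and the weak-simulation relation supplied by Theorem~\ref{EGK} lets us produce a sample of the output distribution of $\mathcal{C}$ (conditioned on its postselections) from a sample of $\mathcal{C}^*$ (conditioned on its postselections) by poly$(n)$-time classical post-processing.

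The one point that needs a brief check is that clauses (ii) and (iii) of Theorem~\ref{EGK} are genuinely compatible, i.e.\ that the non-adaptive structure guaranteed in the postselection-free setting survives when some intermediate $Z$ measurements of $\mathcal{C}'$ are postselected. I expect this to be the main (though mild) obstacle, but inspecting the proof of Theorem~\ref{pbcthm} it is immediate: the non-adaptivity argument in part (ii) relies on each random sign $\lambda_{P_j}$ introduced at step (c)(iii) being paired with a deterministic $\lambda_N=+1$ coming from a dummy $Z$ measurement, while the postselection modification in part (iii) merely replaces the random $\lambda_{P_j}$ by the deterministic value $+1$. Neither modification introduces any conditioning of later gates on earlier outcomes, so $\mathcal{C}^*$ is a bona fide postselected CM circuit, completing the plan.
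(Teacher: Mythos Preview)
Your proposal is correct and coincides with the paper's approach (a): replace each $T$ gate by a $T$-gadget postselected on outcome $+1$, obtaining a non-adaptive postselected Clifford circuit on $\ket{0}^{\otimes n}\ket{A}^{\otimes t}$, and then invoke Theorem~\ref{EGK}(ii) together with (iii). Your explicit verification that clauses (ii) and (iii) combine consistently is a useful addition that the paper leaves implicit. The paper also records a second, more elementary route (b) that bypasses Theorem~\ref{EGK} altogether: start from $\ket{A}^{\otimes(n+t)}$, convert the first $n$ qubits to $\ket{0}$ by applying a postselected $T^\dagger$-gadget followed by $H$ to each, and then run the Clifford unitary and final measurements directly---this is already a postselected CM circuit.
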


\begin{proof}

We may suppose without loss of generality that $\mathcal{C}$ has the following form: the input state is $|0\rangle^{\otimes n}$, followed by Clifford and $T$ gates, and finally some number of lines is measured in the $Z$ basis. Of these, some are postselected to outcome $k=+1$. To begin, we replace each $T$ gate with a $T$-gadget where the gadget measurement is postselected to outcome $+1$ so the correction $S$ is not required. As no other part of the circuit acts on this ancilla line again this measurement can be performed at the end of the circuit. The resulting circuit $\tilde{\mathcal{C}}$ then has input $|0\rangle^{\otimes n} |A\rangle^{\otimes t}$, which is acted on by a Clifford unitary $U$ followed by $Z$ measurements, some of which are postselected. The proof is now completed in either one of two possible ways, labelled (a) and (b), as follows:\\
(a) Theorem \ref{EGK}(ii) and (iii) can then be used to provide an algorithm for simulating the above circuit $\tilde{\mathcal{C}}$ by a postselected CM circuit. \\
(b) We start with the state $\ket{A}^{\otimes (n+t)}$ and first convert it to $|0\rangle^{\otimes n} |A\rangle^{\otimes t}$. This is achieved by applying a $T$-gadget postselected to outcome $-1$ (thus implementing a $T^\dagger$ gate), and then $H$, to each of the first $n$ qubits, and then we apply the Clifford unitary $U$ and final $Z$ measurements above. As the gadget measurements can be moved to the end, this whole process is a postselected CM circuit.

\end{proof}

\begin{corollary} \label{postmulti}
Any language in post-BQP can be decided with bounded error by a postselected CM circuit assisted by efficient classical computation. Thus if uniform families of CM circuits could be weakly classically simulated to within multiplicative error $1\leq c <\sqrt{2}$, then the polynomial hierarchy would collapse to its third level.
\end{corollary}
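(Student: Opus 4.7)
The plan is to treat the two assertions of the corollary separately, since the first is essentially an immediate consequence of Theorem \ref{postselection} while the second follows the now-standard Bremner--Jozsa--Shepherd template, with the multiplicative-error tolerance $\sqrt{2}$ arising from the two factors of $c$ introduced by conditioning on postselection.

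For the first assertion I would argue as follows. Any language $L \in$ post-BQP is decided with bounded error by a uniform family of postselected unitary quantum circuits $\{\mathcal{C}_w\}$ (with some output lines designated as postselection lines and one further line as the decision line). Applying Theorem \ref{postselection} to each $\mathcal{C}_w$ produces, via a classical poly-time reduction, a uniform family of postselected CM circuits $\{\mathcal{C}^*_w\}$ that weakly simulate them. Since weak simulation reproduces the output distribution on the non-postselection register conditioned on the postselected outcomes, the original bounded-error gap is preserved, and the postselected CM family (assisted by the efficient classical translation and output processing provided by Definition \ref{weaksimdef}) decides $L$.

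For the second assertion, I would derive a contradiction from the existence of a classical poly-time weak simulator $\mathcal{A}$ for uniform CM families that achieves $c$-multiplicative error with $c<\sqrt{2}$. Given an arbitrary $L \in$ post-BQP, invoke the first part to obtain a postselected CM family $\{\mathcal{C}^*_w\}$ deciding $L$ with bounded error. Let $p(o,s\mid w)$ denote the joint probability of output bit $o$ and postselection bits $s$ in $\mathcal{C}^*_w$, and let $q$ be the distribution produced by $\mathcal{A}$. Under $c$-multiplicative approximation, the conditional distribution satisfies
\[
\frac{q(o, s=0^*\mid w)}{q(s=0^*\mid w)} \in \left[\tfrac{1}{c^2},\, c^2\right]\cdot\frac{p(o, s=0^*\mid w)}{p(s=0^*\mid w)},
\]
and since $c^2<2$, the post-BQP bounded-error gap (which can be amplified arbitrarily close to $1$ versus $0$) is preserved. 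Thus $L$ is decided by a postselected classical probabilistic poly-time process, i.e.\ $L\in$ post-BPP. Consequently post-BQP $\subseteq$ post-BPP.

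The final step invokes the known inclusions post-BQP $=$ PP (Aaronson) and post-BPP $=$ BPP$_{\text{path}}\subseteq \Sigma_3^{\mathrm{P}}$ (Han--Hemaspaandra--Thierauf), so PP $\subseteq \Sigma_3^{\mathrm{P}}$. Toda's theorem then gives PH $\subseteq$ P$^{\mathrm{PP}}\subseteq$ P$^{\Sigma_3^{\mathrm{P}}}$, which collapses PH to its third level. The only non-routine ingredient is the $\sqrt{2}$ threshold, which is forced precisely by the squaring in the conditional probability above; I would not expect any obstacle beyond citing the template of \cite{Bremner2011} once Theorem \ref{postselection} is in hand, since the reduction there is oblivious to the particular model, requiring only postselected weak multiplicative simulability.
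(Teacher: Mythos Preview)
Your proposal is correct and follows essentially the same approach as the paper: the paper's own proof simply states that the first claim follows immediately from Theorem \ref{postselection} and the second from \cite{Bremner2011}, and you have merely unpacked the Bremner--Jozsa--Shepherd argument that the citation points to (including the $c^2<2$ conditioning computation that yields the $\sqrt{2}$ threshold). There is no substantive difference in method.
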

\begin{proof}
The first claim follows immediately from Theorem \ref{postselection}, and then the second follows from 
\cite{Bremner2011}.
\end{proof}


\subsection{Background for additive error case}

Before considering simulation of CM circuits up to additive error, we first outline a general framework and argument (following \cite{Aaronson2011,Bremner2016} but with some generalisation of context for our later purposes) that has been used in the literature (for example in \cite{Aaronson2011, Bremner2016,Fefferman,Morimae2017,Bouland2017,BFNV18}) to argue for hardness of classical simulation, up to additive error, of a variety of classes of quantum computational processes. 

Consider a given class $\cc = \{ C_\theta : \theta \in \Theta\}$ of quantum circuits parameterised by $\theta \in \Theta$, with each circuit also having its input state specified. We will generically denote the number of qubit lines of $C_\theta$ by $n$. Let the output be given by a measurement of all $n$ lines and let $p_\theta(x)$ with $x\in B_{n}$ denote the output probability distribution of $C_\theta$.

Introduce the following computational (sampling) task $\ct_\cc$ associated to the class $\cc$: for any given $\theta$, return $(\theta, y)$ where $y\in B_{n}$ has been sampled according to the output distribution $p_\theta$ of $C_\theta$. We will be interested in the complexity of simulating this task (and some approximate variants) as a function of $n$.

By an $\epsilon$-additive error simulation of the task $\ct_\cc$, we mean a process that given $\theta$, returns $(\theta, y')$ where $y'$ has been sampled according to a distribution $q_{\theta}$ on $B_{n}$ which is an $\epsilon$-additive approximation of the distribution $p_\theta$.

An alternative task (that neither a classical nor quantum computer is likely to be able to efficiently achieve) is to compute a value for $p_{\theta}(x)$ for given $\theta$ and $x$, up to a (suitably specified) multiplicative error. 
Indeed  for relevant classes that are studied in the literature, it can be shown that computing such approximations  is \#P hard in the worst-case. This task is of computational significance since for suitably chosen classes $\cc$ the probability values can be used to represent quantities that are of independent physical or mathematical interest.

Our aim is to argue for classical hardness of simulation of the sampling problem $\ct_\cc$ up to additive approximation. To do this we will need to conjecture that estimating the value of $p_{\theta}(x)$ up to (suitable) multiplicative approximation remains \#P hard not just in the worst-case, but in an average-case setting of the following kind.

For each class $\cc$ and number of lines $m$ introduce the set 
\[ \cd = \{ (\theta, x): C_\theta \mbox{ has $m$ lines and $x\in B_m$} \} . \]
For each $m$ we have a given probability measure $\pi$ on the set of $\theta$'s that occur in $\cd$, and let $\nu$ denote the uniform probability measure on $B_m$. Then $\pi \times \nu$ is the product measure on $\cd$. Finally, to the class $\cc$ we associate two constants: a measure size $0<f<1$ and an error tolerance $\eta$.
 
We introduce the following conjecture that we will refer to as Hardness$(\cc,\pi)$.\\[2mm]
{\bf Average-case hardness conjecture for $\cc$ with $\pi$}: \textit{
let ${\cal F} \subseteq \cd$ be any chosen subset of $\cd$ having $\pi\times \nu$ probability measure $f$. Then it is \#P hard to approximate the values $p_\theta(x)$ for all $(\theta,x)\in {\cal F}$ up to multiplicative error $\eta$}.\,$\Box$\\[2mm]
Note that if $\pi$ is the uniform measure too, then the subsets $\cf$ (for each $m$) will also be of fractional size $f$. But for nonuniform $\pi$'s there will be subsets of measure $f$ that have smaller fractional size than $f$ and asserting their \#P hardness is a stronger conjecture. The use of nonuniform distributions will also feature significantly in the anticoncentration property below.

As an example, in \cite{Bremner2016}  classes of IQP circuits $C$ are considered and conjectures 2 and 3 of \cite{Bremner2016} can be expressed as above, with $\pi$ being the uniform distribution, $f=1/24$ and $\eta= 1/4+o(1)$. In \cite{Bremner2017} the authors also consider the same classes of IQP circuits, but a nonuniform $\pi$ is used. This leads to a different average case hardness conjecture from those appearing in \cite{Bremner2016}.

The arguments below will use several complexity classes that we will loosely describe here in a way that suffices to express the hardness of simulation argument. For more complete descriptions see for example Ref\cite{ABbook}. $\rm{BPP}^{\rm{NP}}$ is the class of decision problems that can be solved by randomised classical polynomial time computations armed with an oracle for any problem in NP. $\rm{FBPP}^{\rm{NP}}$ is the same except that the outputs can be bit strings rather than just a single bit. $\rm{BPP}^{\rm{NP}}$ is in the third level of the tower of complexity classes known as the polynomial hierarchy PH. $\textrm{P}^{\#\textrm{P}}$ is the class of decision problems solvable in classical polynomial time, given access to an oracle for any \#P problem; and it is known (Toda's theorem) that  $\textrm{PH}\subseteq \textrm{P}^{\#\textrm{P}}$.

Now suppose that the sampling task $\ct_\cc$  can be solved up to additive error by a classical polynomial time algorithm $\mathcal{A}$. The first step is to show this ability to sample implies the existence of an  $\rm{FBPP}^{\rm{NP}}$ algorithm which, with use of $\ca$, can estimate $p_{\theta}(x)$ up to an additive error, for each $\theta$ and a constant fraction of  choices of $x$. After that an anticoncentration result will be used to convert the additive error into a multiplicative one, at least for a good measure of instances of $(\theta,x)$. The final step is to then invoke the average-case hardness conjecture for $\cc$: if our multiplicative approximation determination (computable in $\rm{FBPP}^{\rm{NP}}$) is ${\#\textrm{P}}$ hard then $\textrm{P}^{\#\textrm{P}}\subseteq \textrm{P}^{{\rm FBPP}^{\rm{NP}}} = {\rm BPP}^{\rm NP}$. The latter class is in the third level of PH and then by Toda's theorem, PH will collapse to its third level. However such a collapse is widely regarded as extremely implausible (similar to a collapse of NP to P), providing plausibility that the purported classical polynomial time algorithm $\mathcal{A}$ for solving $\ct_\cc$ up to additive error, cannot exist (if the average hardness conjecture is accepted).

\begin{lemma}\label{additive_lemma} (adapted from Lemma 4 of  \cite{Bremner2016})
Suppose there is a classical polynomial time algorithm $\mathcal{A}$ that simulates the sampling task $\ct_\cc$ up to additive error $\epsilon$. Then for any $0< \delta< 1$  there is an $\rm{FBPP}^{\rm{NP}}$ algorithm that, for each $\theta$,  approximates $p_{\theta}(x)$ up to additive error
\begin{equation} \label{additive}
\frac{p_{\theta}(x)}{\poly(n)}+(1+o(1))\cdot \frac{\epsilon}{2^{n} \delta}
\end{equation}
for at least a fraction $1-\delta$ of all $x\in B_{n}$. Thus for any probability measure $\pi$, the subset of $\cd$ to which eq. (\ref{additive}) applies, has $\pi\times\nu$ measure at least $1-\delta$ (since the measure of the full space of $\theta$'s is always unity).
\end{lemma}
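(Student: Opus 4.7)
\noindent\textbf{Proof plan for Lemma \ref{additive_lemma}.}

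The overall strategy is the by-now-standard Stockmeyer-plus-Markov argument from \cite{Aaronson2011,Bremner2016}. Let $q_\theta(x)$ denote the output distribution of the assumed classical sampler $\mathcal{A}$ on input $\theta$, so that by hypothesis $\sum_x |p_\theta(x) - q_\theta(x)| \leq \epsilon$. The plan is to first show that $q_\theta$ is pointwise close to $p_\theta$ on most inputs $x$, and then to exhibit an $\mathrm{FBPP}^{\mathrm{NP}}$ procedure that approximates $q_\theta(x)$ itself with multiplicative precision. Triangle-inequality bookkeeping then produces the claimed bound on the approximation to $p_\theta(x)$.

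\noindent\textbf{Step 1 (Markov averaging).} The first thing I would do is interpret $\sum_x |p_\theta(x)-q_\theta(x)| \leq \epsilon$ as saying that, with $x$ drawn uniformly from $B_n$, the expected value of $|p_\theta(x)-q_\theta(x)|$ is at most $\epsilon/2^n$. Applying Markov's inequality to this non-negative random variable yields that, for any $0<\delta<1$,
\begin{equation*}
\Pr_{x\sim\nu}\!\left[\,|p_\theta(x)-q_\theta(x)| > \frac{\epsilon}{2^n \delta}\right] \leq \delta,
\end{equation*}
so on a subset $G_\theta\subseteq B_n$ of fractional size at least $1-\delta$ we have the pointwise additive estimate $|p_\theta(x)-q_\theta(x)| \leq \epsilon/(2^n\delta)$. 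Because this holds for every $\theta$, the product measure $\pi\times\nu$ of the set of good $(\theta,x)$ is at least $1-\delta$, which is exactly the measure statement claimed in the lemma.

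\noindent\textbf{Step 2 (Stockmeyer approximate counting).} Next I would invoke Stockmeyer's theorem. Because $\mathcal{A}$ is a classical poly-time randomised algorithm, its acceptance probability on a given output string $x$ is a quantity of the form $q_\theta(x)=|\{r: \mathcal{A}(\theta,r)=x\}|/2^{|r|}$, i.e., the ratio of two quantities computable by a $\#\mathrm{P}$ oracle. Stockmeyer's theorem then provides, for any polynomial $\poly(n)$, an $\mathrm{FBPP}^{\mathrm{NP}}$ procedure that on input $(\theta,x)$ outputs a number $\tilde q(x)$ with the multiplicative guarantee
\begin{equation*}
|\tilde q(x) - q_\theta(x)| \leq \frac{q_\theta(x)}{\poly(n)}.
\end{equation*}
This step is what makes the lemma nontrivial, and is where I expect the main care is needed: the degree of the polynomial in the Stockmeyer guarantee can be made arbitrarily large, but the procedure must be applied to the \emph{classical} distribution $q_\theta$ (which is poly-time sampleable), not to $p_\theta$ itself.

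\noindent\textbf{Step 3 (combine).} Finally, for any $x\in G_\theta$, the triangle inequality gives
\begin{equation*}
|\tilde q(x) - p_\theta(x)| \leq \frac{q_\theta(x)}{\poly(n)} + \frac{\epsilon}{2^n\delta} \leq \frac{p_\theta(x)}{\poly(n)} + \frac{1}{\poly(n)}\cdot\frac{\epsilon}{2^n\delta} + \frac{\epsilon}{2^n\delta},
\end{equation*}
where in the second step I used $q_\theta(x) \leq p_\theta(x) + \epsilon/(2^n\delta)$ (valid on $G_\theta$). The middle term is lower order than the last, so the total bound reads $p_\theta(x)/\poly(n) + (1+o(1))\cdot \epsilon/(2^n\delta)$, matching (\ref{additive}). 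Since $G_\theta$ has fractional size $\geq 1-\delta$ for every $\theta$, the approximation holds on a $1-\delta$ fraction of $x$'s for each $\theta$, as required. The principal obstacle in writing this up cleanly is not conceptual but notational: one has to be careful that the Stockmeyer-style multiplicative precision can be tuned independently of $\delta$ and $\epsilon$ so that the cross term genuinely appears as $o(1)$ relative to $\epsilon/(2^n\delta)$.
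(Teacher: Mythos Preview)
Your proposal is correct and matches the paper's own approach: the paper does not write out a proof but simply refers the reader to the argument of Lemma~4 in \cite{Bremner2016}, which is precisely the Stockmeyer-plus-Markov argument you have outlined (Markov on the $\ell_1$ error to get a good set $G_\theta$, Stockmeyer approximate counting in $\mathrm{FBPP}^{\mathrm{NP}}$ applied to the classical sampler's output probabilities, then a triangle-inequality combination). Your bookkeeping in Step~3, including the treatment of the cross term as $o(1)$, is exactly how the $(1+o(1))$ factor in eq.~(\ref{additive}) arises.
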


This lemma is readily proved by following the argument of the proof of Lemma 4 in \cite{Bremner2016}, with minor notational modifications. 

To obtain a multiplicative error from this additive one, we require an  anticoncentration property of the following form.\\[2mm]
{\bf Anticoncentration property for $\cc$ with $\pi$}: \textit{there are constants $\alpha>0$ and $0\leq \beta\leq 1$ such that 
$p_\theta(x)\geq \alpha/2^{n}$ holds on a subset of $\cd$ of $\pi\times \nu$ measure at least $\beta$}.\, $\Box$\\[2mm]
In the literature a property of this form is proved for some classes $\cc$ (e.g. in \cite{Bremner2016,Bouland2017,Morimae2017,Bremner2017}) and conjectured to hold for others (e.g. in \cite{Aaronson2011}). Proofs of the property generally involve applying the Paley-Zygmund inequality to the probability measure $\pi\times\nu$.

Suppose now that the anticoncentration property holds for $\cc$. Then by choosing $\delta$ in Lemma \ref{additive_lemma} to be $\beta/2$ we guarantee an overlap $\Xi\subset \cd$ of probability measure at least $\beta/2$ on which the anticoncentration property $p_\theta(x)/\alpha\geq 1/2^{n}$  and the additive approximation  bound of eq. (\ref{additive}) both hold.

Then substituting  $p_\theta(x)/\alpha$ for $1/2^{n}$ in eq. (\ref{additive}) the approximation bound becomes 
\[
\frac{p_{\theta}(x)}{\poly(n)}+(1+o(1))\cdot \frac{2\epsilon}{\alpha\beta} p_\theta(x)
\]
giving a multiplicative approximation bound of size $\frac{2\epsilon}{\alpha\beta} + o(1)$ for $p_\theta(x)$, for a $\beta/2$ measure subset of $\cd$. 

Finally collecting all the above, we arrive at the following conclusion.
\begin{theorem} \label{sup}
Let $\cc$ be any class of quantum circuits with associated measure $\pi$ for which the anticoncentration property holds (with constants $\alpha$ and $\beta$). Suppose that the sampling task $\ct_\cc$ can be efficiently classically simulated up to additive error $\epsilon$.
Then if the average-case hardness conjecture holds with measure size $f=\beta/2$ and error tolerance $\eta=2\epsilon/(\alpha\beta)$, the polynomial hierarchy will collapse to its third level. 
\end{theorem}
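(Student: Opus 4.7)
The plan is essentially to stitch together the ingredients laid out in the preceding discussion: Lemma \ref{additive_lemma}, the anticoncentration property, the average-case hardness conjecture, and Toda's theorem. Almost all the real work has already been done; what remains is to combine the pieces carefully and track the measures and error parameters.

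First, I would assume for contradiction that there is a classical polynomial time algorithm $\ca$ that simulates $\ct_\cc$ up to additive error $\epsilon$. Invoking Lemma \ref{additive_lemma} with the particular choice $\delta = \beta/2$, I obtain an $\rm{FBPP}^{\rm{NP}}$ algorithm that, for each $\theta$, outputs an estimate of $p_\theta(x)$ meeting the additive bound in eq.\ (\ref{additive}) on a set $A\subseteq\cd$ of $\pi\times\nu$ measure at least $1-\beta/2$. Separately, the anticoncentration hypothesis gives a set $B\subseteq\cd$ of $\pi\times\nu$ measure at least $\beta$ on which $p_\theta(x)\geq \alpha/2^{n}$.

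Next I would form the intersection $\Xi=A\cap B$. A standard union-bound on the complements yields $(\pi\times\nu)(\Xi)\geq \beta-\beta/2 = \beta/2$. On $\Xi$ both bounds hold simultaneously, so I can substitute $1/2^{n}\leq p_\theta(x)/\alpha$ into the second summand of eq.\ (\ref{additive}) to convert it into a bound of the form $\bigl(1/\poly(n) + (1+o(1))\cdot 2\epsilon/(\alpha\beta)\bigr)\, p_\theta(x)$, i.e.\ a multiplicative approximation of $p_\theta(x)$ with error $\eta = 2\epsilon/(\alpha\beta) + o(1)$. Thus, on a subset of $\cd$ of measure exactly $f=\beta/2$, the $\rm{FBPP}^{\rm{NP}}$ algorithm computes an $\eta$-multiplicative approximation of $p_\theta(x)$.

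Finally I would invoke the average-case hardness conjecture Hardness$(\cc,\pi)$ with the matching parameters $f=\beta/2$ and $\eta=2\epsilon/(\alpha\beta)$: computing such approximations on any subset of measure $f$ is $\#\textrm{P}$-hard. Combining with the previous step shows $\#\textrm{P}\subseteq \textrm{FBPP}^{\textrm{NP}}$, hence $\textrm{P}^{\#\textrm{P}}\subseteq \textrm{P}^{\rm{FBPP}^{\rm{NP}}} = \textrm{BPP}^{\textrm{NP}}$, and Toda's theorem $\textrm{PH}\subseteq \textrm{P}^{\#\textrm{P}}$ then forces $\textrm{PH}\subseteq \textrm{BPP}^{\textrm{NP}}$, so PH collapses to its third level. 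The main thing to be careful about is the bookkeeping: matching the $\delta$ of Lemma \ref{additive_lemma} to the $\beta$ of anticoncentration so that the overlap $\Xi$ has measure at least the $f=\beta/2$ required by the hardness conjecture, and making sure that the multiplicative error that emerges is exactly the $\eta$ the conjecture assumes. No genuinely new idea is needed beyond what already appears in the preamble to the theorem.
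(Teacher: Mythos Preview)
Your proposal is correct and follows essentially the same route as the paper: choosing $\delta=\beta/2$ in Lemma \ref{additive_lemma}, intersecting with the anticoncentration set to obtain a measure-$\beta/2$ overlap $\Xi$, converting the additive bound to the multiplicative one $2\epsilon/(\alpha\beta)+o(1)$, and then invoking Hardness$(\cc,\pi)$ together with Toda's theorem. As you yourself note, the theorem is just a collation of the preceding discussion, and your write-up matches it step for step (one cosmetic point: the overlap has measure \emph{at least} $\beta/2$, not exactly $\beta/2$, but this only strengthens the conclusion).
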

For example in \cite{Bremner2016} we have $\epsilon =1/192$, and the anticoncentration property is shown to hold with uniform $\pi$, $\alpha=1/2$ and $\beta=1/12$. So to obtain collapse of PH we need the average-case hardness conjecture to be valid with error tolerance $\eta=2\epsilon/(\alpha\beta)=1/4$ and fraction $f=\beta/2 =1/24$.\\[1mm]

\subsection{Hardness of classical simulation of CM with additive error}

We now show that CM circuits cannot be classically efficiently simulated with additive error unless PH collapses, given average-case hardness conjectures. While CM circuits have been shown before \cite{Bouland2017,Pashayan2017} to have this property for one particular average-case-conjecture, here we show that actually a broad variety of such conjectures apply, such that if any one of them is proven, it implies the hardness of CM circuit simulation. Furthermore, in previous work, this hardness result for CM was shown by invoking the fact that Clifford gates form a 2-design \cite{Dankert2009} and that 2-designs anticoncentrate \cite{Hangleiter2017,Mann2017}, to give the needed anticoncentration  property. Here we follow a very different method, instead using the ability of CM circuits (via Therorem \ref{EGK}) to simulate any nonadaptive circuit. This allows CM circuits to simulate several other classes of circuits (not necessarily 2-designs) and inherit their average-case hardness conjecture as a basis for hardness of CM circuit simulation up to additive error.

Consider any class of unitary circuits $\cc = \{ C_\theta : \theta \in \Theta\}$ and associated measure $\pi$ on $\Theta$, for which a suitable anticoncentration property holds, and whose classical simulation up to additive error would imply collapse of PH if we assume Hardness$(\cc,\pi)$. Suppose that these circuits have been expressed as circuits of gates from the universal set of basic Clifford gates with $T$ and $T^\dagger$. We can use any choice of such a representation. Now consider the expanded class $\cc^T$ obtained by taking each circuit $C_\theta$ and replacing each $T$ and $T^\dagger$ gate by either $T$ or $T^\dagger$ in all combinations. If $C_\theta$ has $t$ $T$ and $T^\dagger$ gates then it will give rise to $2^t$ circuits in $\cc^T$, and these can be labelled by $(\theta,\tau)$ where $\tau$ is a $t$-bit string indicating the choices of $T$ and $T^\dagger$. Accordingly, we write $\cc^T=\{ C_{\theta,\tau} : \theta\in \Theta,\, \tau\in B_t \}$.

$\cc^T$ is exactly the class of circuits we obtain if we implement the circuits $\C_{\theta}$ using $T$ gadgets for each $T$ and $T^\dagger$ gate, but omit all the adaptive $S$ gate corrections that are normally specified by the $T$-gadget measurement outcomes. Denote that non-adaptive circuit by $U_\theta$ with outputs $(x,\tau)$ where $\tau\in B_t$ is the string of gadget measurement outcomes and $x$ arises from the output lines from $C_\theta$.
Each of the $2^t$ possibilities for $\tau$ will occur with equal probability. Note that the circuits $U_\theta$ are unitary Clifford circuits (having only final $Z$ measurements). Indeed the measurement within any (generally intermediate) $T$-gadget can now be moved to the end of the circuit as that line is not acted on again, and the measurement outcome is not used in any adaptations. Because these circuits are unitary Clifford circuits, they can be simulated by CM circuits using Theorem \ref{EGK} (ii). Denote the associated CM circuit (with input state $\ket{A}^{\otimes t}$) by $V_\theta$. Finally let $p_\theta(x)$, $p_{\theta, \tau}(x)$ and $u_\theta(x,\tau)$ (with $x\in B_n$, $\tau\in B_t$) denote the output probabilities for the circuits $C_\theta$, $C_{\theta,\tau}$ and $U_\theta$ respectively.  
  
  Note that for each $\theta$ there is a $\tau_0=\tau_0(\theta)$ for which $p_{\theta,\tau_0}(x) =p_\theta(x)$, viz. $\tau_0$ just specifies the $T$ and $T^\dagger$ choices that actually occur in $C_\theta$. Furthermore, since each $\tau$ arises in the output of $U_\theta$ with equal probability $1/2^t$, the relationship between  $C_{\theta,\tau}$ and $U_\theta$ gives (via conditional probabilities):
    \begin{equation}\label{probrel}
   p_{\theta, \tau}(x) = u_\theta(x,\tau)\, 2^t.
   \end{equation}
   Finally in addition to distribution $\pi$ on the $\theta$'s, let $\nu$ and $\nu'$ denote the uniform distribution on the $x$'s and $\tau$'s respectively. Let $\pr_{\pi\times\nu\times\nu'}(\theta,x,\tau)$ denote the probability of $(\theta,x,\tau)$ in the product distribution $\pi\times\nu\times\nu'$, and similarly for $\pr_{\pi\times\nu'}(\theta,\tau)$, $\pr_\pi(\theta)$ etc.
   
We will show that, for some classes $\cc$ of circuits already proved to have the additive simulation hardness property of Theorem \ref{sup} (subject to an associated Hardness$(\cc,\pi)$ conjecture), that $\cc^T$ contains no new circuits that were not already present in $\cc$. Thus the labels $(\theta,\tau)$ will label the circuits of $\cc$ with generally high redundancy, and we write $\cc^T=\cc$ in this situation.
Since such circuits can be simulated by CM circuits, classical simulation of CM circuits up to additive error can then imply collapse of PH, subject to the conjecture Hardness$(\cc,\pi)$ of the class $\cc$, as will be formalised in the Theorem below.

Suppose now that $\cc=\cc^T$. Then for each $(\theta,\tau)$ there is $\tilde{\theta}=\tilde{\theta}(\theta,\tau)$ with $C_{\theta,\tau}$ being $C_{\tilde{\theta}}$ so
   \[ p_{\theta, \tau}(x) =p_{\tilde{\theta}}(x). \]
   We will also require the following $\theta$-sampling relation: the $C_\theta$ circuits occurring multiply in $\cc^T$, occur with the same probability in $\cc^T$ (wrt distribution $\pi\times\nu'$)  as they did in $\cc$ (wrt distribution $\pi$): \begin{equation}\label{probcond}
   \sum_{(\theta,\tau):\tilde{\theta}(\theta,\tau)=\theta_0} \pr_{\pi\times\nu'}(\theta,\tau)=\pr_\pi(\theta_0).
   \end{equation}
   
  \begin{theorem} \label{main1}
Consider any class of circuits $\cc$ with associated distribution $\pi$ for which the following hold:\\
(i) the anticoncentration property (with parameters $\alpha$ and $\beta$);\\
(ii) $\cc=\cc^T$ and the $\theta$-sampling relation eq. (\ref{probcond}).\\
Then if every CM circuit can be efficiently classically simulated to additive error $\epsilon$, the average-case hardness conjecture for $(\cc,\pi)$ with parameters $f=\beta/2$ and $\eta=2\epsilon/(\alpha\beta)$ will imply that PH collapses.
\end{theorem}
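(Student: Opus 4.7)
The plan is to reduce to the conclusion of Theorem \ref{sup} applied to $\cc$ itself, by converting the hypothesised $\epsilon$-additive CM simulator into an $\mathrm{FBPP}^{\mathrm{NP}}$ approximation algorithm for $p_\theta(x)$ of the kind the proof of Theorem \ref{sup} requires. The bridge is as follows: for each $C_\theta\in\cc$ I would expand all $T$ and $T^\dagger$ gates into $T$-gadgets and drop the adaptive $S$ corrections, producing a non-adaptive Clifford circuit $U_\theta$ with only terminal $Z$ measurements, whose output distribution $u_\theta(x,\tau)$ equals $p_{\tilde\theta(\theta,\tau)}(x)/2^t$ by eq. (\ref{probrel}) together with the hypothesis $\cc=\cc^T$. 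Theorem \ref{EGK}(ii) then converts $U_\theta$ into a CM circuit $V_\theta$ with the same output distribution, so the CM simulator provides, for each $\theta$, a sampler whose output is $\epsilon$-close in total variation to $u_\theta$.

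I would then apply Lemma \ref{additive_lemma} to the CM family $\{V_\theta\}$, with measure $\pi$ on $\theta$ and uniform measure $\nu\times\nu'$ on the $n+t$ output bits $(x,\tau)$. This yields an $\mathrm{FBPP}^{\mathrm{NP}}$ algorithm $\mathcal{A}(\theta,x,\tau)$ approximating $u_\theta(x,\tau)$ with additive error $u_\theta(x,\tau)/\poly(n+t)+(1+o(1))\epsilon/(2^{n+t}\delta)$ on a subset $\Gamma$ of $\pi\times\nu\times\nu'$ measure at least $1-\delta$. Rescaling the output by $2^t$ and using $2^t u_\theta(x,\tau)=p_{\tilde\theta(\theta,\tau)}(x)$ converts this into an approximation of $p_{\tilde\theta}(x)$ with error $p_{\tilde\theta}(x)/\poly(n)+(1+o(1))\epsilon/(2^n\delta)$ on the same set, which is exactly the form Lemma \ref{additive_lemma} would give if applied directly to $\cc$.

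To re-key this algorithm in the coordinates $(\theta^*,x)\in\cd$ demanded by Hardness$(\cc,\pi)$, I would invoke the $\theta$-sampling relation (\ref{probcond}): the pushforward of $\pi\times\nu'$ under $(\theta,\tau)\mapsto\tilde\theta$ is exactly $\pi$, so the image $\Gamma'$ of $\Gamma$ under $(\theta,\tau,x)\mapsto(\tilde\theta,x)$ has $\pi\times\nu$ measure at least $1-\delta$. On input $(\theta^*,x)$ an $\mathrm{FBPP}^{\mathrm{NP}}$ procedure can pick random $\tau\sim\nu'$, use the NP oracle to locate a preimage $\theta$ with $\tilde\theta(\theta,\tau)=\theta^*$ (possible because $\cc=\cc^T$ and $\tilde\theta$ is polynomial-time computable), run $2^t\mathcal{A}(\theta,x,\tau)$, and take the median of several independent trials; a Markov-inequality argument on $\pi\times\nu'\times\nu$ shows the result is correct on a $\pi\times\nu$ measure $\geq 1-O(\delta)$ subset of $\cd$. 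Taking $\delta=\beta/2$ and intersecting with the anticoncentration subset of measure $\geq\beta$, the additive bound becomes multiplicative error $\eta=2\epsilon/(\alpha\beta)+o(1)$ on a subset of measure at least $\beta/2$; Hardness$(\cc,\pi)$ with $f=\beta/2$ and this $\eta$ then gives $\mathrm{P}^{\#\mathrm{P}}\subseteq\mathrm{BPP}^{\mathrm{NP}}$ and collapse of PH to its third level by Toda's theorem. I expect the main technical obstacle to be precisely this pushforward-plus-preimage step: engineering a single uniform $\mathrm{FBPP}^{\mathrm{NP}}$ machine on input $(\theta^*,x)$ that correctly handles $\mathcal{A}$'s internal randomness and the potentially non-uniform distribution of preimages of $\theta^*$ under $\tilde\theta(\cdot,\tau)$, while guaranteeing success on a set of $\pi\times\nu$ measure $\geq\beta/2$.
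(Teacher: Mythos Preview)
Your proposal takes essentially the same route as the paper: compress $U_\theta$ to a CM circuit via Theorem~\ref{EGK}(ii), apply Lemma~\ref{additive_lemma} over the enlarged output space $(x,\tau)$, combine with anticoncentration, and use the $\theta$-sampling relation to return to $(\theta^*,x)$-space where Hardness$(\cc,\pi)$ bites. The only structural difference is the order of the last two operations: the paper \emph{lifts} anticoncentration from $(\theta,x)$ to $(\theta,\tau,x)$ (using eqs.~(\ref{probrel}) and~(\ref{probcond}) to get $u_\theta(x,\tau)\geq\alpha/2^{n+t}$ on a set of $\pi\times\nu'\times\nu$ measure $\beta$), obtains the multiplicative bound \emph{there}, and only afterwards pushes forward via $(\theta,\tau,x)\mapsto(\tilde\theta,x)$; you instead push the additive bound down first and intersect with anticoncentration in $(\theta^*,x)$-space. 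Both orderings work and yield the same constants $f=\beta/2$, $\eta=2\epsilon/(\alpha\beta)$.

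On the final step the paper is much terser than you: it simply notes (via eq.~(\ref{probcond})) that the pushforward of $\pi\times\nu'$ under $\tilde\theta$ is $\pi$, so the image of the good measure-$\beta/2$ set has $\pi\times\nu$ measure $\geq\beta/2$, and asserts that $p_\theta(x)$ can therefore be approximated in $\mathrm{FBPP}^{\mathrm{NP}}$ on that image; it does not build an explicit procedure keyed on $(\theta^*,x)$. Your proposed NP-oracle preimage search plus median-over-random-$\tau$ has a wrinkle: the oracle's choice among the several $\theta$ satisfying $\tilde\theta(\theta,\tau)=\theta^*$ is adversarial, while Lemma~\ref{additive_lemma}'s guarantee is per-$\theta$ (for each fixed $\theta$, a $1-\delta$ fraction of $(x,\tau)$ is good), so a Markov argument over random $\tau$ alone need not control success after adversarial $\theta$-selection. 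This is exactly the subtlety you flag in your last sentence; the paper's presentation sidesteps rather than resolves it, so this is a shared loose end rather than a defect peculiar to your argument.
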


\begin{proof} 
We use the notations and definitions introduced above. Since $U_\theta$ can be simulated by a CM circuit, if every CM circuit can be efficiently classically simulated to additive error $\epsilon$, then so can the distribution $u_\theta(x,\tau)$. So by Lemma \ref{additive_lemma} applied in $(\theta,\tau,x)$ space, there is a $(1-\beta/2)$ sized subset in $\pi\times\nu'\times\nu$ measure where an $\rm{FBPP}^{\rm{NP}}$ algorithm can calculate an additive approximation to $u_\theta(x,\tau)$ with additive error bound of 
\begin{equation}\label{add1}
\frac{u_\theta(x,\tau)}{\poly(n+t)}+(1+o(1))\cdot \frac{2\epsilon}{2^{n+t} \beta}
\end{equation}
(since we have $n+t$ lines now). 

Next we will want a measure $\beta$ subset of $(\theta,\tau,x)$'s on which the anticoncentration property $u_\theta(\tau,x)\geq \alpha/2^{n+t}$ holds. By $(\cc,\pi)$ anticoncentration, there is a measure $\beta$ subset of $(\theta,x)$'s with $p_\theta(x)\geq \alpha/2^n$. So by the $\theta$-sampling relation eq. (\ref{probcond}) and eq. (\ref{probrel}) there is a measure $\beta$ subset of $(\theta,\tau,x)$'s with 
\begin{equation}\label{conc}
u_\theta(x,\tau)=\frac{p_{\theta,\tau}(x)}{2^t}\geq \frac{\alpha}{2^{n+t}}
\end{equation}
(noting that for any $x$, $\pr_{\pi\times\nu}(\theta,x)=\pr_\pi(\theta)/2^n$). Combining eqs. (\ref{conc}) and (\ref{add1}) we get a measure $\beta/2$ subset of $(\theta,\tau,x)$'s on which $u_\theta(x,\tau)$ can be calculated by an 
$\rm{FBPP}^{\rm{NP}}$ algorithm to multiplicative approximation $2\epsilon/(\alpha\beta)+o(1)$, and this also applies to $p_{\theta,\tau}(x) = u_\theta(x,\tau)2^t$ (as multiplicative approximations are invariant under scale changes).

Finally we want to map this back to $(\theta,x)$ space. Note that for any $x$
\[ \pr_{\pi\times\nu'\times\nu}(\theta,\tau,x)=\frac{1}{2^n}\pr_{\pi\times\nu'}(\theta,\tau)\leq 
\frac{1}{2^n}\pr_{\pi}(\tilde{\theta}(\theta,\tau))=\pr_{\pi\times\nu}(\tilde{\theta},x)
\]
(where the inequality follows from eq. (\ref{probcond})). Hence the map $(\theta,\tau,x)\mapsto (\tilde{\theta}(\theta,\tau),x)$ gives a subset of $(\theta,x)$'s of measure $\geq \beta/2$ on which $p_\theta(x)$ can be calculated to multiplicative approximation $2\epsilon/(\alpha\beta)+o(1)$ by an $\rm{FBPP}^{\rm{NP}}$ algorithm. Hence the average-case hardness conjecture for $(\cc,\pi)$  implies that PH collapses to its third level.
\end{proof}

Examples of circuit classes in the literature for which a suitable anticoncentration property holds, $\cc=\cc^T$ and the $\theta$-sampling relation eq. (\ref{probcond}) holds, include the following.\\[2mm]
\textbf{IQP circuits associated with the Ising model  \cite{Bremner2016}}\\
This is the class of circuits $\cc$ having input $\ket{0}^{\otimes n}$ acted on by $H^{\otimes n} U H^{\otimes n}$, where $U$ is  unitary and chosen in the following way: apply $T^{v_i}$ to each qubit line $i$, and $CS^{w_{ij}}$ to each pair of qubits $i,j$, where $v_i$ and $w_{ij}$ (all collectively comprising the label $\theta$) are chosen in all possible combinations from $\{0,...,7\}$ and $\{0,...,3\}$ respectively, and $CS$ is the controlled-$S$ gate. Furthermore the $CS$ gate is implemented in terms of Clifford+T+T$^\dagger$ gates using the gadget of Figure 3. The distribution $\pi$ is the uniform distribution.
\begin{figure}[h!b] \label{CS}
    \begin{center}
    \includegraphics[trim=0 0 0 350,clip,width=\textwidth]{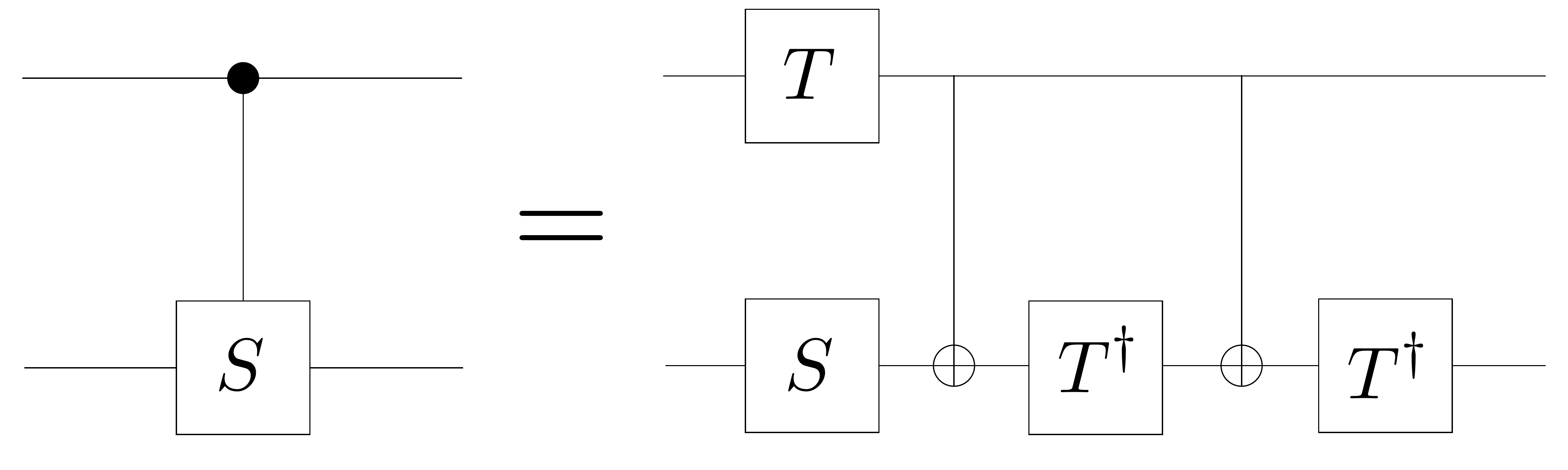}
    \caption{Decomposing the controlled-$S$ gate into Clifford+$T$+$T^\dagger$ gates.}
    \end{center}
   
\end{figure}

To see that $\cc=\cc^T$ note first that if any initial $T$ or $T^\dagger$ gates are changed (to the other choice), the resulting circuit is clearly still a circuit in the original set. However, there are also $T$ and $T^\dagger$ gates within the $CS$ gadget of Figure 3 to consider.
If the $T$ or $T^\dagger$ gates at either end are changed, this can be corrected by applying further $T$ gates. If the middle $T^\dagger$ gate is swapped, the result is $CS \,(T\otimes T)$. So in each of these cases, the resulting circuit is still from the original set. The $\theta$-sampling relation eq. (\ref{probcond}) holds because for each $\theta$ there is a $\tau_0=\tau_0(\theta)$ with $\tilde{\theta}(\theta,\tau_0)=\theta$ and the fact that for any fixed $\tau'$ (and varying $\theta$) the mapping  $(\theta,\tau_0(\theta))\mapsto (\theta,\tau_0\oplus \tau')$ is one-to-one on the underlying $\tilde{\theta}$'s (with $\oplus$ being addition of $t$-bits strings at each entry).\, $\Box$\\[2mm]
\textbf{Sparse IQP circuits \cite{Bremner2017}}\\
This class is the same as the above (so $\cc=\cc^T$) but with a different distribution $\pi$.  Specifically, having chosen each $v_i$ and $w_{ij}$ uniformly, each $CS^{w_{ij}}$ gate is applied only with some probability p, while each  $T^{v_i}$ is applied as in the above case. This amounts to $w_{ij}=0$ being chosen with probability $\frac{1}{4}+\frac{3}{4}(1-p)$ and other $w_{ij}$'s with probability $p/4$ (and $v_i$'s chosen uniformly as before). Also as before when a $T$ gate inside of $CS$ is swapped, it always becomes $CS$ with some extra $T$ gates. The $\theta$-sampling relation eq. (\ref{probcond}) holds since reassigning $T$ and $T^\dagger$ gates always preserves the number of two qubit gates in the circuit.\, $\Box$ \\[2mm]
\textbf{Random Circuit Sampling \cite{Boxio2017}}\\
Another class of circuits was put forward by the Google/UCSB team, and called random circuit sampling. The gates used in these circuits are from $\{CZ,X^{1/2},Y^{1/2}, T\}$. In \cite{Hangleiter2017} it is shown that circuits from this set anticoncentrate if they are chosen as follows: 
let $G=\{CZ,X^{1/2},X^{-1/2},Y^{1/2},Y^{-1/2}, T,T^\dagger\}$ (i.e. the previous set closed under inverses). In each time step either $U_{1,2}\otimes U_{3,4}\otimes ... \otimes U_{n-1,n}$ or $U_{2,3}\otimes U_{4,5}\otimes ... \otimes U_{n-2,n-1}$ is applied, for all possible choices of $U_{j,j+1}$ from $G$ (with 1-qubit gates $U$ appearing as $I\otimes U$ or $U\otimes I$). Finally all $n$ lines are measured in the computational basis. The distribution $\pi$ over $\cc$ is the uniform distribution.
All gates in $G$ besides $T$ and $T^\dagger$ are Clifford, so reassigning $T$ and $T^\dagger$ gates clearly results in circuits from the same class i.e. $\cc=\cc^T$, and a uniform distribution for $\pi$ satisfies eq. (\ref{probcond}).

In  \cite{BFNV18} it is shown that Random Circuit Sampling has a property similar to the required average-case hardness result viz. that the conjecture holds if the task is to compute $p_{\theta}(x)$ exactly. This is known to be \#P hard, even for the average case. Boson sampling  \cite{Aaronson2011} is the only other class where this is kind of result has been proved. Although referring to exact calculation, this can nevertheless be viewed as providing evidence that the necessary average-case hardness conjecture (involving approximate computation, up to multiplicative error) may hold. \, $\Box$

CM circuits simulating any one of these three classes inherit the hardness of the original circuits. If average-case hardness is shown for any of them then it implies the same is true for CM circuits and therefore that CM cannot be efficiently classically simulated up to additive error. This result is a natural consequence of the Extended Gottesman--Knill theorem that shows how CM circuits can simulate other types of quantum computations.


For other classes of circuits we generally have $\cc\neq\cc^T$ i.e. $\cc^T$ contains circuits that were not already present in $\cc$. However, if $\cc^T$ also has a suitable anticoncentration property, then up to an average-case hardness conjecture, PH will collapses if $\cc^T$ circuits can be classically simulated to additive error. Note that if $\cc$ has a worst-case hardness result (as is generally the case for classes considered), then so does $\cc^T$ since its circuits always form a superset of $\cc$. This provides evidence for a suitably analogous average-case conjecture for $\cc^T$. Hence, in the case that $\cc^T$ also anticoncentrates, it is also likely to be hard to classically simulate. For any $\cc$, the circuits in $\cc^T$ can always be simulated by CM circuits (in the sense above, used in Theorem \ref{main1}, taking the uniform distribution over the $\tau$'s as above) and we obtain the following result.
\begin{theorem} \label{main2}
Suppose that $\cc^T$ (arising from $(\cc,\pi)$ as described above) satisfies an anticoncentration property with constants $\alpha$ and $\beta$. Then if every CM circuit can be efficiently classically simulated to additive error $\epsilon$, PH will collapse to the third level if we assume an average hardness conjecture for $\cc^T$ with parameters $f=\beta/2$ and $\eta=2\epsilon/(\alpha\beta)$, extending the corresponding conjecture for $\cc$. Furthermore, if $\cc$ had the worst-case hardness property, then so does $\cc^T$. 
\end{theorem}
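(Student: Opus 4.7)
The plan is to follow the structure of the proof of Theorem \ref{main1}, but apply the reasoning directly to $\cc^T$ rather than transporting the conclusion back to $\cc$. Since we no longer assume $\cc=\cc^T$, the $\theta$-sampling relation and the final map $(\theta,\tau,x)\mapsto(\tilde\theta(\theta,\tau),x)$ become unnecessary: the pairs $(\theta,\tau)$ directly parametrise $\cc^T$ with product measure $\pi\times\nu'$, and we apply the anticoncentration and hardness hypotheses to $\cc^T$ in this space.

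First, for each $\theta$ I would replace every $T$ and $T^\dagger$ gate of $C_\theta$ by a $T$-gadget with its adaptive $S$ correction omitted, yielding a non-adaptive unitary Clifford circuit $U_\theta$ on $n+t$ qubits with input $\ket{0}^{\otimes n}\ket{A}^{\otimes t}$. Writing $u_\theta(x,\tau)$ for its output distribution over circuit outputs $x$ and gadget outcomes $\tau$, each $\tau$ is equally likely, so $p_{\theta,\tau}(x)=u_\theta(x,\tau)\,2^t$ just as in equation (\ref{probrel}). Since $U_\theta$ is a unitary Clifford circuit with mixed stabiliser/magic input, Theorem \ref{EGK}(ii) implies it is weakly simulable by a CM circuit on $t$ qubits. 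Hence an efficient classical additive-$\epsilon$ simulation of every CM circuit yields an efficient classical additive-$\epsilon$ simulation of $u_\theta(x,\tau)$ with respect to $\pi\times\nu'\times\nu$.

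Next I would apply Lemma \ref{additive_lemma} in $(\theta,\tau,x)$ space with $n+t$ total lines and $\delta=\beta/2$, obtaining an $\mathrm{FBPP}^{\mathrm{NP}}$ algorithm that additively approximates $u_\theta(x,\tau)$ on a $(1-\beta/2)$-measure subset. The assumed anticoncentration for $\cc^T$ gives $p_{\theta,\tau}(x)\geq\alpha/2^n$, equivalently $u_\theta(x,\tau)\geq\alpha/2^{n+t}$, on a $\beta$-measure subset. Intersecting these two subsets produces a $\beta/2$-measure subset on which the additive bound converts (exactly as in Theorem \ref{sup}) into a multiplicative approximation of size $2\epsilon/(\alpha\beta)+o(1)$ for $u_\theta(x,\tau)$, and hence also for $p_{\theta,\tau}(x)=u_\theta(x,\tau)\,2^t$ by scale invariance of multiplicative error. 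Invoking the average-case hardness conjecture for $\cc^T$ with $f=\beta/2$ and $\eta=2\epsilon/(\alpha\beta)$ then forces $\mathrm{P}^{\#\mathrm{P}}\subseteq\mathrm{BPP}^{\mathrm{NP}}$, collapsing PH to its third level.

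For the worst-case hardness claim, I would observe that $\cc\subseteq\cc^T$: for each $\theta$ the specific pattern $\tau_0(\theta)$ encoding the actual $T$ and $T^\dagger$ choices of $C_\theta$ recovers $C_\theta$ as a member of $\cc^T$. Any worst-case $\#\mathrm{P}$-hard instance for $\cc$ therefore sits inside $\cc^T$, so the worst-case hardness property is inherited. Compared to Theorem \ref{main1} there is no substantive new obstacle: having moved both the anticoncentration assumption and the average-case hardness conjecture onto $\cc^T$ itself, the argument reduces to the observation that CM circuits weakly simulate $U_\theta$ via Theorem \ref{EGK}(ii), combined with the standard additive-to-multiplicative conversion of Theorem \ref{sup}.
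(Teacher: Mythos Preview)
Your proposal is correct and follows essentially the same approach as the paper. The paper does not give a formal proof of Theorem \ref{main2}, but the surrounding text indicates it follows from the argument of Theorem \ref{main1} applied directly to $\cc^T$ (with the $\theta$-sampling relation and the back-transport to $\cc$ dropped), together with the observation that $\cc\subseteq\cc^T$ for the worst-case claim; this is exactly what you do.
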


One example of circuits for which $\cc \subsetneq \cc^T$ and $\cc^T$ also anticoncentrates, is the class of \textbf{Conjugated Clifford circuits} introduced in \cite{Bouland2017}. Here we have circuits of the form $V^{\otimes n \dagger}  U V^{\otimes n}$, where $V$ is any fixed 1-qubit gate and $U$ is any Clifford circuit (so we get a class for each choice of $V$), and $\pi$ is the uniform distribution. The representation of $V$ in terms of Clifford+T+T$^\dagger$ gates generally contains $T$ and $T^\dagger$ gates, and when these are reassigned in all combinations in $V^{\otimes n}$, the result is no longer necessarily a gate of the form $W^{\otimes n}$ i.e. the gates applied on different lines will generally be different, and the $n$-qubit gate on one end will also not necessarily be the inverse of the one on the other end. Hence $\cc \subsetneq \cc^T$. However, this new class of circuits does anticoncentrate. This follows from the original anticoncentration proof in Ref \cite{Bouland2017} (Lemma 4.3 there) which still applies for arbitrary $n$-qubit gates replacing $V^{\otimes n}$ and $V^{\otimes n\dagger}$ on the ends.

We expect there to be other classes to which Theorem \ref{main2} can be applied, providing further corresponding average hardness conjectures which suffice to make CM circuits hard to classically simulate up to additive error. That is because a common strategy for proving that a class of circuits anticoncentrates is to show that it is an $\epsilon$-approximate 2 design and then use the result \cite{Hangleiter2017,Mann2017}, that such 2-designs have the anticoncentration property. In this vein the following conjecture if true, would be a useful result.

\begin{conjecture}\label{conjdesign}
Suppose $\cc$ with $\pi$ is an $\epsilon$-approximate 2 design. Then $\cc^T$ with $\pi\times \nu$ is also an approximate 2 design. \end{conjecture}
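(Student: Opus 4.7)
The plan is to show that the second-moment superoperator $M^{(2)}_{\pi\times\nu}(X) = \mathbb{E}_{\theta\sim\pi,\,\tau\sim\nu}[U_{\theta,\tau}^{\otimes 2} X (U_{\theta,\tau}^\dagger)^{\otimes 2}]$ is close in diamond norm to the Haar twirl $M^{(2)}_{\mathrm{Haar}}$, exploiting the hypothesis that $M^{(2)}_\pi$ already is. To set up, fix a Clifford$+T+T^\dagger$ representation of each $C_\theta$ and parse it as an alternating product $U_\theta = A_k^\theta G_k \cdots A_1^\theta G_1 A_0^\theta$ with Clifford segments $A_i^\theta$ and single-gate factors $G_i \in \{T, T^\dagger\}$. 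Then $U_{\theta,\tau}$ is obtained from $U_\theta$ by toggling $G_i$ according to the bit $\tau_i \oplus \tau_0(\theta)_i$, and since $T^\dagger = T S^\dagger$, each toggle amounts to inserting a Clifford $S^{\pm 1}$ right after the corresponding $T$ gate.

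Next, for fixed $\theta$ the bits $\tau_i$ are independent under $\nu$, so the $\tau$-averaging factors into local insertions at each $T$-gate position of the two-copy superoperator $\mathcal{T}^{(2)}(Y) = \tfrac{1}{2} T^{\otimes 2} Y (T^\dagger)^{\otimes 2} + \tfrac{1}{2}(T^\dagger)^{\otimes 2} Y T^{\otimes 2}$, interleaved with the Clifford-layer superoperators $\mathcal{A}_i^\theta(Y) = (A_i^\theta)^{\otimes 2} Y ((A_i^\theta)^\dagger)^{\otimes 2}$. Writing $\mathcal{T}^{(2)} = \mathcal{I}^{(2)} + \Delta$, where $\mathcal{I}^{(2)}$ is the untoggled conjugation superoperator and $\Delta$ captures the $\tau$-averaging perturbation, one can expand $M^{(2)}_{\pi\times\nu}$ as $M^{(2)}_\pi$ plus a sum of correction terms indexed by the subsets of $T$-gate positions where $\Delta$ is active. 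I would then argue that each such correction is absorbed by the $\epsilon$-approximate 2-design property of $\pi$: when $\theta\sim\pi$ the intervening Clifford segments are approximately Haar-random, and conjugation by a Haar-random unitary collapses an inserted superoperator onto its projection onto the commutant of $U^{\otimes 2}$, driving each correction to zero up to an error controlled by $\epsilon$.

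The hard part is making this absorption rigorous. The set $\{T, T^\dagger\}$ is not itself a unitary $1$-design, so $\Delta$ is genuinely nonzero and cannot be trivially removed on a per-gate basis; propagating the $\epsilon$-approximate $2$-design property through a product of up to $O(\mathrm{poly}(n))$ interleaved $\Delta$ insertions, while keeping the cumulative error bounded, requires a nontrivial commuting argument that exploits the global (rather than local) 2-design structure of $\pi$. An alternative route is to bound the frame potential $F^{(2)}_{\pi\times\nu} \leq F^{(2)}_\pi + O(\epsilon)$ by using independence of the $\tau_i$'s to factor $\mathbb{E}_\tau[|\mathrm{Tr}(U_{\theta_1,\tau_1}^\dagger U_{\theta_2,\tau_2})|^4]$ into single-gate contributions; here the difficulty is controlling the cross-terms between pairs $(\theta_1,\tau_1),(\theta_2,\tau_2)$ with $\tau_1 \neq \tau_2$, which demand fine combinatorial information about the Clifford$+T$ structure of the circuits. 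Either route requires nontrivial technical work, which is why the statement is left as a conjecture.
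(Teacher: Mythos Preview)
The statement you are addressing is labelled a \emph{conjecture} in the paper, and the paper offers no proof of it at all; it is stated immediately after Theorem~\ref{main2} as a desideratum that, if established, would enlarge the range of classes $\cc$ to which the hardness transfer applies. So there is nothing in the paper to compare your proposal against.

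Your write-up is not a proof either, and to your credit you say so explicitly in the final sentence. What you have produced is a reasonable outline of two natural attack routes (expanding the second-moment superoperator around the untoggled circuit, or bounding the frame potential), together with an honest identification of where each route gets stuck. That is an appropriate response to a conjecture, but it should not be labelled a ``proof proposal''.

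One technical point worth flagging in your first route: you write that ``when $\theta\sim\pi$ the intervening Clifford segments are approximately Haar-random'', but the hypothesis gives you nothing of the sort. The $2$-design assumption is on the \emph{full} circuits $U_\theta$, not on the individual Clifford blocks $A_i^\theta$ between $T$-gate positions; for many natural ensembles (e.g.\ the IQP or random-circuit-sampling classes discussed just above the conjecture) the local segments are far from Haar. You partially acknowledge this when you say the absorption must ``exploit the global (rather than local) 2-design structure of $\pi$'', but the earlier sentence overstates what is available and is the real obstruction: without some way to propagate the global design property down to the gate level, the $\Delta$-expansion has $2^t$ terms and no mechanism to make them small. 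This is precisely why the statement remains open.
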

The circuit class $\cc^T$ depends on the choice of representation of circuits in $\cc$ in terms Clifford+T+T$^\dagger$ gates. 
If Conjecture \ref{conjdesign} were to hold for just one choice of such a representation for $(\cc,\pi)$  that is an $\epsilon$-approximate 2 design, then the conclusions of Theorem \ref{main2} will apply.


\noindent\textbf{Acknowledgements.} We thank M. Bremner and A. Montanaro for helpful discussions and clarifications. We thank Ryuhei Mori and an anonymous referee for pointing out approach (b) in Theorem 5.1 to us.
We acknowledge Mem Fox for suggesting terminology.\\

\noindent\textbf{Ethics statement.} This work did not involve any issues of ethics.\\

\noindent\textbf{Data accessibility statement.} This work does not have any experimental data.\\

\noindent\textbf{Competing interests.} There are no competing interests for this paper.\\

\noindent\textbf{Funding statement.} We acknowledge support from the QuantERA ERA-NET Cofund in Quantum Technologies implemented within the European Union's Horizon 2020 Programme (QuantAlgo project), and administered through the EPSRC grant EP/R043957/1. MY is supported by the Australia Cambridge Bragg Scholarship scheme, and SS by the Leverhulme Early Career Fellowship scheme.\\

\noindent\textbf{Authors' contributions.} All authors in collaborative work made substantial contributions to conception and drafting of this work. All authors gave final approval for publication.

\bibliographystyle{abbrv}

\bibliography{Clifford_Magic_bibtex}

\end{document}